\renewcommand{\arraystretch}{1.2}
\newcommand{\newreptheorem}[2]{\newtheorem*{rep@#1}{\rep@title}\newenvironment{rep#1}[1]{\def\rep@title{#2 \ref*{##1}}\begin{rep@#1}}{\end{rep@#1}}}
\newcommand{\newconjecture}[2]{\newconjecture*{rep@#1}{\rep@title}\newenvironment{rep#1}[1]{\def\rep@title{#2 \ref*{##1}}\begin{rep@#1}}{\end{rep@#1}}}
\theoremstyle{plain}
\newtheorem{theorem}{Theorem}
\newtheorem{proposition}{Proposition}
\theoremstyle{definition}
\NewDocumentEnvironment{mytheorem}{m}%
  {%
   \begin{theorem}
  }%
  {\end{theorem}}
\newcommand{\set}[2]{\{#1,\ldots,#2\}}
\newcommand{\Int}{\mathbb{Z}}
\newcommand{\Real}{\mathbb{R}}
\newcommand{\Rat}{\mathbb{Q}}
\newcommand{\Nat}{\mathbb{N}}
\newcommand{\ent}{\mathcal{H}}
\providecommand{\Aa}{\mathcal{A}}
\providecommand{\Bb}{\mathcal{B}}
\providecommand{\Cc}{\mathcal{C}}
\providecommand{\Dd}{\mathcal{D}}
\providecommand{\Ccc}{\mathscr{C}}
\providecommand{\Tt}{\mathcal{T}}
\providecommand{\Nn}{\mathcal{N}}
\providecommand{\Mm}{\mathcal{M}}
\newcommand{\ceil}[1]{\left\lceil #1 \right\rceil}
\providecommand{\myb}{\lambda_b}
\newcommand{\tcw}{T_{\mathsf CW}}
\newcommand{\field}{\mathbb{F}}
\newcommand{\braket}[1]{\left\langle #1 \right\rangle}
\newcommand{\degen}{\unlhd}
\begin{document}

\title{Faster Rectangular Matrix Multiplication\\ by Combination Loss Analysis}
\author{
Fran{\c c}ois Le Gall\\
Nagoya University\\
legall@math.nagoya-u.ac.jp
}

\date{}
\maketitle
\thispagestyle{empty}
\setcounter{page}{1}
\begin{abstract}
Duan, Wu and Zhou (FOCS 2023) recently obtained the improved upper bound on the exponent of square matrix multiplication $\omega<2.3719$ by introducing a new approach to quantify and compensate the ``combination loss" in prior analyses of powers of the Coppersmith-Winograd tensor.
  In this paper we show how to use this new approach to improve the exponent of rectangular matrix multiplication as well. Our main technical contribution is showing how to combine this analysis of the combination loss and the analysis of the fourth power of the Coppersmith-Winograd tensor in the context of rectangular matrix multiplication developed by Le Gall and Urrutia (SODA 2018).
\end{abstract}

%\newpage
%=====================================
\section{Introduction}\label{sec:intro}
%=====================================
\subsection{Prior works on the exponent of matrix multiplication}
\paragraph{Square matrix multiplication.}
Matrix multiplication is one of the most fundamental computational tasks. The exponent of square matrix multiplication (denoted $\omega$), in particular, is a central and ubiquitous quantity in theoretical computer science. The exponent of (square) matrix multiplication represents the exponent of the asymptotic complexity of the best possible matrix multiplication algorithm: it can be defined as the smallest $\omega$ such that two $n\times n$ matrices can be multiplied in $O(n^{\omega+\epsilon})$ time for any $\epsilon>0$. 
It is easy to show that $\omega\in[2,3]$, but the precise value of $\omega$ is still unknown. The first non-trivial upper bound $\omega<2.81$ was obtained in 1969 by Strassen \cite{Strassen69}, and later improved several times  \cite{Bini+79,Coppersmith+82,Pan79,Pan81,Romani82, Schonhage81,StrassenFOCS86}. In particular, Strassen obtained in 1986 the upper bound $\omega<2.48$ by developing a new technique called  the laser method \cite{StrassenFOCS86}.

Since 1987, all new upper bounds on $\omega$ have been obtained by applying the laser method to a mathematical construction called the Coppersmith-Winograd tensor \cite{Coppersmith+90}, which we denote $\tcw$ in this paper. First, Coppersmith and Winograd \cite{Coppersmith+90} analyzed $\tcw$ and its second power $\tcw^{\otimes 2}$ using the laser method and obtained the bound $\omega<2.3754770$. More than twenty years later, Stothers~\cite{Stothers10} and Vassilevska Williams \cite{WilliamsSTOC12} were able to analyze the fourth power $\tcw^{\otimes 4}$ and obtained the new upper bound $\omega<2.3729269$ (see also \cite{Davie+13,LeGall14}). The approach by Vassilevska Williams was especially powerful since it made possible to analyze recursively powers of the Coppersmith-Winograd tensor for $m=2^\ell$ with any $\ell\ge 1$. Analyzing the eighth power using this approach, Vassilevska Williams \cite{WilliamsSTOC12} obtained the improved bound $\omega<2.3728642$. Deriving upper bounds on $\omega$ using this approach nevertheless requires solving a complicated (in particular, non-convex) optimization problem, which becomes extremely challenging for $\ell>3$. Le Gall~\cite{LeGall14} showed that this optimization problem can be relaxed into a convex optimization problem, which made possible to completely analyze $\tcw^{\otimes m}$ for $m=16$ and $m=32$, and consequently obtained the upper bound $\omega<2.3728639$. Recently, Alman and Vassilevska Williams~\cite{Alman+SODA21} showed how to refine the analysis of one key steps of the laser method, and consequently obtained an improved bound $\omega<2.3728596$.  All these bounds are reported in Table~\ref{table:chart}. We also mention a series of works \cite{Alman2021,Alman+ITCS18,Alman+FOCS18,Ambainis+STOC15,Christandl+20,Christandl+21} showing the limits of these approaches (in particular, the impossibility to prove $\omega=2$ from the Coppersmith-Winograd tensor).

Very recently, Duan, Wu and Zhou \cite{Duan+23}  obtained the improved upper bound $\omega<2.371866$. Their key discovery is that prior analyses of powers of the Coppersmith-Winograd tensor suffer from a  ``combination loss''.  Duan, Wu and Zhou \cite{Duan+23} gave a quantitative analysis of this combination loss, showed how to compensate it, and applied this methodology to the fourth power and the eighth power of the Coppersmith-Winograd tensor. This new approach crucially requires an asymmetric analysis of the powers of the Coppersmith-Winograd tensor  (prior works on square matrix multiplication only needed a symmetric analysis). \vspace{-2mm}

\begin{table}[tb!]
\renewcommand\arraystretch{1}% (1.0 is for standard spacing)
\begin{center}
\begin{tabular}{|l|c|l|l|}
%\begin{tabular}{ l l l l}
\hline
Upper bound&$m$ & Reference\bigstrut&Technique\\ 
\hline
$\omega<2.3871900$ &1& Coppersmith and Winograd~\cite{Coppersmith+90}&Laser method\\
\hline
$\omega<2.3754770$ &2&   Coppersmith and Winograd~\cite{Coppersmith+90}&Laser method\\
\hline
\multirow{2}{*}{$\omega<2.3729269$}&\multirow{2}{*}{4}&  Stothers \cite{Stothers10}, Vassilevska Williams \cite{WilliamsSTOC12} &\multirow{2}{*}{Recursive laser method}\\
&& (see also \cite{Davie+13,LeGall14})  &\\
\hline
$\omega<2.3728642$&8&  Vassilevska Williams \cite{WilliamsSTOC12}&Recursive laser method\\
\hline
$\omega<2.3728640$&16&  Le Gall \cite{LeGall14}&Recursive laser method\\
\hline
$\omega<2.3728639$&32&  Le Gall \cite{LeGall14}&Recursive laser method\\
\hline
$\omega<2.3728596$&32& Alman and Vassilevska Williams \cite{Alman+SODA21}&Refined laser method\\
\hline
\hline
$\omega<2.371919$ &4&  Duan, Wu and Zhou \cite{Duan+23}&Combination loss analysis\!\\
\hline
$\omega<2.371866$ &8&  Duan, Wu and Zhou \cite{Duan+23}&Combination loss analysis\!\\
%\hline
%$\omega<2.370677$ &4&  This paper&Refined asymmetric analysis
%\tabularnewline
\hline
\end{tabular}
\caption{Upper bounds on $\omega$ obtained by analyzing the $m$-th power of the Coppersmith-Winograd tensor.}\label{table:chart}\vspace{-3mm}
\end{center}
\end{table}

\paragraph{Rectangular matrix multiplication.}
Rectangular matrix multiplication appears as a bottleneck in several computational problems (e.g., the construction of fast algorithms for the all-pairs shortest paths problem \cite{Alon+ESA07, Roditty+11,YusterSODA09,ZwickSTOC99,ZwickJACM02},
the dynamic computation 
of the transitive closure \cite{Demetrescu+FOCS00,Sankowski+10},
finding ancestors \cite{Czumaj+TCS07} or detecting directed cycles \cite{Yuster+SODA04}).
From a theoretical perspective, the most relevant quantity is the exponent of rectangular matrix multiplication: for any $\kappa\ge 0$, the exponent of rectangular matrix multiplication $\omega(\kappa)$ is defined as the smallest $\omega(\kappa)$ such that the product of an $n\times \ceil{n^\kappa}$ matrix by an $\ceil{n^\kappa}\times n$ matrix can be computed in $O(n^{\omega+\epsilon})$ time for any $\epsilon>0$.\footnote{It is known (see, e.g., \cite{Blaser13, Burgisser+97}) that the arithmetic complexity of the following three types of matrix products is the same: computing the product of an $n\times n$ matrix by an $n\times m$ matrix; computing the product of an $n\times m$ matrix by an $m\times n$ matrix; computing the product of an $m\times n$ matrix by an $n\times n$ matrix. The exponent of rectangular matrix multiplication thus represents the exponent of the asymptotic complexity of the best possible algorithm for any of these three kinds of matrix multiplication.} Note that $\omega=\omega(1)$.

There is a long history of research  on proving upper bounds on $\omega(\kappa)$ for $\kappa\neq 1$ as well \cite{CoppersmithSICOMP82,Coppersmith97,Huang+98,Ke+08,LeGallFOCS12,LeGall+SODA18,Lotti+83}.
The best known upper bounds, which are shown in Table \ref{table_SODA18}, have been obtained by Le Gall and Urrutia~\cite{LeGall+SODA18}. These upper bounds were obtained by analyzing the fourth power of the Coppersmith-Winograd tensor, and improved prior bounds obtained by analyzing the second power \cite{LeGallFOCS12} and the first power \cite{Coppersmith97}. 

\begin{table}[!tbp]
\begin{center}
\begin{tabular}{ |c | c |}
  \hline
  \multirow{2}{*}{$\kappa$} & upper bound \\%& \multirow{2}{*}{$q$}\\
  &on $\omega(\kappa)$\\%&\\
  \hline                      
0.31389 & 2\\
%0.31477 & 2.000001\\
0.32 & 2.000064\\
0.33 & 2.000448\\
0.34 & 2.001118\\
0.35 & 2.001957\\
0.40 & 2.010314\\
\hline  
\end{tabular}
\hspace{0.4cm}
\begin{tabular}{ |c | c |}
  \hline
  \multirow{2}{*}{$\kappa$} & upper bound \\%& \multirow{2}{*}{$q$}\\
  &on $\omega(\kappa)$\\%&\\
  \hline
0.50 & 2.044183\\
0.60 & 2.093981\\
0.70 & 2.154399\\
0.80 & 2.222256\\
0.90 & 2.295544\\
1.00 & 2.372927\\
  \hline  
\end{tabular}
\hspace{0.4cm}
\begin{tabular}{ |c | c |}
  \hline
  \multirow{2}{*}{$\kappa$} & upper bound \\%& \multirow{2}{*}{$q$}\\
  &on $\omega(\kappa)$\\%&\\
  \hline
1.10 & 2.453481\\
1.20 & 2.536550\\
1.50 & 2.796537\\
2.00 & 3.251640\\
3.00 & 4.199712\\
5.00 & 6.157233\\
  \hline  
\end{tabular}
\end{center}\vspace{-3mm}
\caption{The upper bounds on $\omega(\kappa)$ from \cite{LeGall+SODA18}.
\label{table_SODA18}}
\end{table}

For $\kappa=1$ (square matrix multiplication), the results from \cite{LeGall+SODA18} recover the upper bound $\omega<2.3729269$ obtained from the analysis of fourth power of the Coppersmith-Winograd tensor by Stothers \cite{Stothers10} and Vassilevska Williams \cite{WilliamsSTOC12}. Note that higher powers of the Coppersmith-Winograd tensor (e.g., the eighth power) have not yet been analyzed in the context of rectangular matrix multiplication, mainly because the optimization problems are significantly more complicated than for square matrix multiplication. Ref.~\cite{LeGall+SODA18} also shows that $\omega(0.31389)=2$ and thus gives the lower bound $0.31389$ on the quantity $\sup\{\kappa\:|\:\omega(\kappa)=2\}$ that is called the dual exponent of matrix multiplication (and sometimes denoted $\alpha$).

\subsection{Statement of our results}
%\paragraph{Our results.}
%The starting point of this work is the observation that two prior works \cite{LeGallFOCS12,LeGall+SODA18} have already developed techniques for the asymmetric analysis of some powers of the Coppersmith-Winograd tensor, in the context of \emph{rectangular} matrix multiplication: Le Gall \cite{LeGallFOCS12} analyzed the second power of the Coppersmith-Winograd tensor, and Le Gall and Urrutia \cite{LeGall+SODA18} then generalized that approach to the fourth power. Part of these analyses and the optimization techniques are actually slightly more general that the asymmetric analysis used in \cite{Duan+23}.

In this work, we show how to improve the results from \cite{LeGall+SODA18} by applying the recent approach from \cite{Duan+23} to analyze the combination loss and consequently refine the analysis of the fourth power of the Coppersmith-Winograd tensor in the context of rectangular matrix multiplication as well. Our new upper bounds on $\omega(\kappa)$ are shown in Table \ref{table_ours} for the same values of $\kappa$ as in Table~\ref{table_SODA18}. While we do not currently obtain a better lower bound on the dual exponent of matrix multiplication, our upper bounds on $\omega(\kappa)$ improve the bounds from \cite{LeGall+SODA18} for all $\kappa>0.31389$.\vspace{2mm}

\begin{table}[!htbp]
\begin{center}
\begin{tabular}{ |c | c |}
  \hline
  \multirow{2}{*}{$\kappa$} & upper bound \\%& \multirow{2}{*}{$q$}\\
  &on $\omega(\kappa)$\\%&\\
  \hline                      
0.31389 & 2\\
%0.31477 & 2.000001\\
0.32 & 2.000059\\
0.33 & 2.000355\\
0.34 & 2.000894\\
0.35 & 2.001726\\
0.40 & 2.010118\\
\hline  
\end{tabular}
\hspace{0.4cm}
\begin{tabular}{ |c | c |}
  \hline
  \multirow{2}{*}{$\kappa$} & upper bound \\%& \multirow{2}{*}{$q$}\\
  &on $\omega(\kappa)$\\%&\\
  \hline
0.50 & 2.044076\\
0.60 & 2.093897\\
0.70 & 2.154283\\
0.80 & 2.222075\\
0.90 & 2.295254\\
1.00 & 2.372537\\
  \hline  
\end{tabular}
\hspace{0.4cm}
\begin{tabular}{ |c | c |}
  \hline
  \multirow{2}{*}{$\kappa$} & upper bound \\%& \multirow{2}{*}{$q$}\\
  &on $\omega(\kappa)$\\%&\\
  \hline
1.10 & 2.452999\\
1.20 & 2.535921\\
1.50 & 2.795600\\
2.00 & 3.250563\\
3.00 & 4.199095\\
5.00 & 6.156708\\
  \hline  
\end{tabular}
\end{center}\vspace{-3mm}
\caption{Our new upper bounds on $\omega(\kappa)$.
\label{table_ours}}
\end{table}

Note that for $\kappa = 1$ (square matrix multiplication) we obtain the upper bound $\omega<2.372537$, which is weaker than the upper bound $\omega<2.371919$ from \cite{Duan+23} obtained by analyzing the fourth power of the Coppersmith-Winograd tensor in the context of square matrix multiplication. This is because our framework is specific to rectangular matrix multiplication, and several refined optimization steps from \cite{Duan+23} are difficult to implement in the context of rectangular matrix multiplication.\footnote{The difficulties are both theoretical and practical. From the theoretical perspective, for rectangular matrix multiplication we cannot use the notion of ``value'' of a tensor, which makes the analysis more difficult. From the practical perspective, the optimization problem that arises when considering the fourth power of the Coppersmith-Winograd tensor in the context of rectangular matrix multiplication is significantly more difficult to solve, and solving it after adding the most refined optimization steps from \cite{Duan+23} seems extremely challenging.} 

\subsection{Technical overview of the paper}
We now give an overview of both the approach by Duan, Wu and Zhou \cite{Duan+23} to analyze the ``combination loss'' (in the context of square matrix multiplication) and the approach by Le Gall and Urrutia \cite{LeGall+SODA18} to perform asymmetric analysis of the fourth power of the Coppersmith-Winograd tensor (in the context of rectangular matrix multiplication), and explain how to combine both approaches to get our improved upper bound on $\omega(\kappa)$. In particular, we introduce six fundamental conditions (\eqref{eq:C1}, \eqref{eq:C2}, \eqref{eq:C4}, \eqref{eq:CC4}, \eqref{eq:D3}, \eqref{eq:DD3}) on the parameters (we use the same labels for these conditions as in Section \ref{sec:asym}  --- the other conditions \eqref{eq:C3}, \eqref{eq:D1}, \eqref{eq:D2}, \eqref{eq:D4}, \eqref{eq:E1}, \eqref{eq:E2} will be introduced in Section \ref{sec:asym}).

\paragraph{The fourth power of the Coppersmith-Winograd tensor.}
The fourth power of the Coppersmith-Winograd tensor can be decomposed as follows:
\[
T_{\mathsf CW}^{\otimes 4}=\sum_{(ijk)\in S_8}T_{ijk}
\]
where $S_8=\left\{(ijk)\in\{0,\ldots,8\}^3\:\vert\: i+j+k=8\right\}$ and each $T_{ijk}$ is a ``smaller'' tensor called a component. The laser method analyzes this tensor by  assigning a weight $\alpha_{ijk}\in[0,1]$ to each component~$T_{ijk}$ The assignment satisfies the condition
\begin{equation}\tag{C1}
\sum_{(i,j,k)\in S_8}\alpha_{ijk}=1 
\end{equation}
and thus the set $\{\alpha_{ijk}\}$ can be considered as a probability distribution, which we denote below by $\alpha$. In the symmetric analysis used in \cite{Alman+SODA21,Davie+13,LeGall14,Stothers10,WilliamsSTOC12}, the distribution $\alpha$ is chosen completely symmetric, i.e., $\alpha_{ijk}=\alpha_{ikj}=\alpha_{jik}=\alpha_{jki}=\alpha_{kij}=\alpha_{kji}$ for all $(ijk)\in S_8$. The asymmetric analysis from \cite{Duan+23,LeGall+SODA18}, on the other hand, only imposes the condition\footnote{While Ref.~\cite{LeGall+SODA18} imposes Condition \eqref{eq:C2}, Ref.~\cite{Duan+23} actually imposes the condition $\alpha_{ijk}=\alpha_{jik}$ (i.e., symmetry of the first and second indices). In this work we adopt the former symmetry condition.}
\begin{equation}\tag{C2}
\alpha_{ijk}=\alpha_{ikj} \:\:\textrm{ for all }\:\: (ijk)\in S_8.
\end{equation}

\paragraph{Standard laser method and symmetric analysis.}
As mentioned above, $\alpha$ is a probability distribution over $S_8$. We denote by $\Aa,\Bb,\Cc\colon\set{0}{8}\to[0,1]$ the marginal distributions of $i$, $j$ and $k$, respectively. From the symmetry condition \eqref{eq:C2}, we have $\Bb=\Cc$. For square matrix multiplication, the potential of a tensor to give a good upper bound on $\omega$ is quantified by the concept of ``value'' (a higher value gives a better upper bound on $\omega$). The standard laser method shows that the ``value'' of the tensor $T_{\mathsf CW}^{\otimes 4}$ is lower bounded by the quantity
\[
\min\{\ent(\Aa),\ent(\Bb)\} + \Mm,
\]
or, (almost) equivalently, lower bounded by the quantity 
\begin{equation}\label{eq:val}
\ent(\Bb) + \Mm
\end{equation}
under the condition 
\begin{equation}\tag{C4}
\ent(\Aa)\ge\ent(\Bb).
\end{equation}
Here $\Mm$ quantifies the contribution of the $T_{ijk}$'s (weighted by the $\alpha_{ijk}$'s) to the ``value" of the whole tensor $T_{\mathsf CW}^{\otimes 4}$. 
%This term depends on new parameters: for each $(ijk)\in S_8$ we have parameters $g_{ijk\ell}$ for some $\ell\le 4$. We denote the whole set of parameters by $g$. 

Under Condition \eqref{eq:C4}, the quantity \eqref{eq:val} is optimized for a fully symmetric distribution, for which we have $\ent(\Aa)=\ent(\Bb)$. This is why prior works on square matrix multiplication before~\cite{Duan+23} used a symmetric analysis, i.e., only considered a fully symmetric distribution $\alpha$.

%Working out the details of the analysis and then finding the parameters that give the largest value, which will actually be done explicitly in Section \ref{sec:asym} (see in particular Section \ref{sub7}), gives the bound 
%\[
%\omega<2.3729269,
%\]
%which is exactly the bound obtained in \cite{WilliamsSTOC12} (see also \cite{LeGall14}) for the analysis of the fourth power of the Coppersmith-Winograd tensor. 

\paragraph{The approach by Duan, Wu and Zhou for exploiting the combination loss.}
The approach in \cite{Duan+23} refined the above analysis, and showed that the ``value'' of the tensor $T_{\mathsf CW}^{\otimes 4}$ is actually lower bounded by the quantity
\[
\min\{\ent(\Aa)-\chi,\ent(\Bb)\} + \Mm,
\]
or, (almost) equivalenty, lower bounded by the quantity 
\[
\ent(\Bb) + \Mm
\]
under the condition 
\begin{equation}\tag{C4'}
\ent(\Aa)-\chi\ge\ent(\Bb).
\end{equation}
Here, $\chi$ is a complicated quantity (depending on $\alpha$ and other parameters) that represents the ``combination loss" in the analysis by the standard laser method. Crucially, we have $\chi\le 0$, and thus Condition \eqref{eq:CC4} is a relaxation of Condition \eqref{eq:C4}.
%Choosing parameters such that $\ent(\Aa)>\ent(\Bb)$ can now be useful to optimize the quantity \eqref{eq:val2}. 
Note that having parameters satisfying \eqref{eq:CC4} but not \eqref{eq:C4}, i.e., parameters such that $\ent(\Bb)>\ent(\Aa)$, can only happen in the asymmetric case, which is why the analysis of~\cite{Duan+23} has to be asymmetric to lead to an improvement on $\omega$.
%We can get a larger value, and thus a better upper bound on $\omega$ if $p(\alpha)<0$ and $\ent(\Aa)>\ent(\Bb)$. For the symmetric analysis we have

\paragraph{Asymmetric analysis by Le Gall and Urrutia for rectangular matrix multiplication.}
While \cite{LeGall+SODA18} did not consider the combination loss, it also required an asymmetric analysis of the fourth power of the Coppersmith-Winograd tensor in order to obtain good bounds on $\omega(\kappa)$ for $\kappa\neq 1$.

For deriving bounds on $\omega(\kappa)$ for $\kappa\neq 1$, the notion of ``value'' cannot be used anymore. Instead, the goal is to show that the tensor $T_{\mathsf CW}^{\otimes 4}$ can be ``converted'' into $r$ copies of a tensor corresponding to the product of an $m\times m$ matrix by an $m\times m^\kappa$ for some $r$ and $m$ depending on the parameters (in particular, depending on the distribution $\alpha$). This gives, via the ``asymptotic sum inequality'' (see Proposition \ref{th:value} in Section \ref{sec:prelim}), the bound
\[
r\cdot m^{\omega(\kappa)}\le (q+2)^4,
\]
and thus an upper bound on $\omega(\kappa)$. The goal is thus to optimize the terms $r$ and $m$ in the conversion in order to obtain the best possible upper bound on $\omega(\kappa)$.

The term $r$ can be decomposed in two parts. The first part can be analyzed similarly to the square case (see below). The second part is much more difficult to analyze (it requires in particular a new condition, Condition \eqref{eq:E1} defined in Section \ref{sub7}).

In order to analyze the term $m$, the approach from \cite{LeGall+SODA18} showed how to perform a global analysis of the $T_{ijk}$'s with $(ijk)\in \bar S_8$, where
$
\bar S_8=\left\{(ijk)\in S_8\:\vert\: i,j,k>0\right\}.
$
 For each $(ijk)\in \bar S_8$ the approach introduced marginals distributions $\Aa_{ijk}$ and $\Bb_{ijk}$ (similar to the marginal $\Aa$ and~$\Bb$ defined above), and showed that the contribution of all these $T_{ijk}$'s can be analyzed globally under the condition 
\begin{equation}\tag{D3}
\sum_{(ijk)\in  \bar S_8}\alpha_{ijk}\ent(\Aa_{ijk}) \ge
\sum_{(ijk)\in \bar S_8}\alpha_{ijk}\ent(\Bb_{ijk}).
\end{equation}

\paragraph{Improving the analysis by exploiting the combination loss (first step).}
As already mentioned, the analysis of the first part of $r$ performed in \cite{LeGall+SODA18} 
is similar to the case of square matrix multiplication, and in particular depends on the quantity $\min\{\ent(\Aa),\ent(\Bb)\}$. Ref.~\cite{LeGall+SODA18} imposed Condition \eqref{eq:C4}, which implies $\min\{\ent(\Aa),\ent(\Bb)\}=\ent(\Bb)$, and then used $\ent(\Bb)$ in the analysis. Combined with the analysis of the second part of $r$ and $m$, this gives the bounds on $\omega(k)$ reported in Table \ref{table_SODA18}.
Concretely, for $\kappa=2$, this gives the upper bound 
\[
\omega(2)<3.251640.
\] 

Our first observation is that the analysis of the combination loss from \cite[section 6]{Duan+23} applies to the rectangular case as well. In Section \ref{sec:loss}, we show that this enables us to replace Condition \eqref{eq:C4} by the relaxed condition \eqref{eq:CC4}. Making this change already gives improved upper bounds on $\omega(\kappa)$. For instance, for $\kappa=2$, we obtain the improved bound
\[
\omega(2)<3.251502.
\]

\paragraph{Exploiting the combination loss recursively.}
The most general approach described in \cite[Sections 7 and 8]{Duan+23} actually applies the analysis of the combination loss recursively. Concretely, for the fourth power of the Coppersmith-Winograd tensor, this means that the analysis of the combination loss can be used to improve the analysis of the contribution of each $T_{ijk}$ as well. In Section \ref{sec:rec}, we implement this strategy in the context of rectangular matrix multiplication. This makes possible to replace Condition \eqref{eq:D3} by the condition
\begin{equation}\tag{D3'}
\sum_{(ijk)\in  \bar S_8}\alpha_{ijk}\left(\ent(\Aa_{ijk})-\chi_{ijk}\right)\ge 
\sum_{(ijk)\in \bar S_8}\alpha_{ijk}\ent(\Bb_{ijk}),
\end{equation}
where $\chi_{ijk}$ is a term that represents the combination loss occuring in the analysis of the term~$T_{ijk}$. Since $\chi_{ijk}\le0$ for each $(ijk)\in \bar S_8$, Condition \eqref{eq:DD3} is a relaxation of Condition \eqref{eq:D3}. With this additional relaxed condition, we obtain the upper bounds on $\omega(\kappa)$ shown in Table \ref{table_ours}.  For instance, for $\kappa=2$, we obtain the further improved bound
\[
\omega(2)<3.250563.
\]

\paragraph{Details about the optimization.}
Besides the theoretical analysis outlined above, a non-trivial contribution of this work is solving the corresponding optimization problem (which is necessary to find the set of parameters that gives the new upper bound on $\omega$). This optimization problem has 90 variables and 35 constraints, including 16 nonlinear constraints. In particular, imposing the global constraints \eqref{eq:D3}, \eqref{eq:DD3}, \eqref{eq:E1}, \eqref{eq:E2} makes the optimization problem significantly harder to solve than the optimization problem for the fourth power of the Coppersmith-Winograd tensor in~\cite{Duan+23}. Additionally, it is unclear how to apply the strategy for optimization used in~\cite{Alman+SODA21,Duan+23,LeGall14}, which consists in converting the problem into a convex optimizing problem and solving it using software for convex optimization, to the rectangular setting.
In consequence, we solve directly the original (nonconvex) optimization problem using the NLPSolve function in Maple \cite{Maple22}.\footnote{Let us mention one technical aspect of the optimization. To be able to solve this problem directly, we force the probability distribution $\alpha$ and probability distributions defined by $h$ (the distributions used to define the marginals $\Aa_{ijk}$ and $\Bb_{ijk}$) to have maximum entropy among all distributions with the same marginals (as was done in \cite{LeGallFOCS12,LeGall+SODA18,WilliamsSTOC12}). Concretely, this is implemented by imposing Conditions \eqref{eq:C3} and \eqref{eq:D4} in Section \ref{sec:asym}.} Details are given in Sections \ref{sub7}, \ref{sub:analysis-loss} and \ref{sub:rec-analysis}.
%(which was also used in \cite{Alman+SODA21,LeGall+SODA18,WilliamsSTOC12})

\paragraph{Remark about the presentation of the paper.} While our approach can be generalized to analyze higher powers of the Coppersmith-Winodrad tensor (e.g., $\tcw^{\otimes 8}$), in this paper we focus entirely on the fourth power. This enables us to use lighter notations and give closed-form expressions for many quantities, which (in our opinion) makes the paper significantly easier to read.\footnote{Another reason why we avoid generalizing to higher powers is that even if the theoretical framework can be derived, it would be extremely challenging to solve the resulting optimization problem, even for the eight power.}

The basis of this paper is the asymmetric analysis of the fourth power of the Coppersmith-Winograph by \cite{LeGall+SODA18}, which is presented in detail in Section \ref{sec:asym}. In Section \ref{sec:loss} we show how to modify this analysis to take in consideration the analysis of the combination loss from \cite[Section 6]{Duan+23}. In Section \ref{sec:rec}, we show how to further modify this analysis to take in consideration the analysis of the combination loss of the components from \cite[Section 7]{Duan+23}. 

%%%%%%%%%%%%%%%%%%%%%%%%%%%%%%%%%%%%%%%%%
\section{Preliminaries}\label{sec:prelim}
%%%%%%%%%%%%%%%%%%%%%%%%%%%%%%%%%%%%%%%%%
In this section we present some notations used in this paper (Section \ref{sub:prelim-gen}), explain the notions of algebraic complexity theory (Section \ref{sub:prelim-alg}), and introduce the Coppersmith-Winograd tensor and its second tensor power (Section \ref{sub:CW}).

%%%%%
\subsection{General notations}\label{sub:prelim-gen}
%%%%%
We use $\log(\cdot)$ to denote the binary logarithm.  Given a probability distribution $p\colon X\to[0,1]$ over a finite set $X$, we write
\[
\ent(p)=-\sum_{x\in X}^n p(x) \log(p(x)) 
\]
its entropy. We will often consider distributions over the set of integers $X=\set{0}{n}$, for some integer $n$. In this case we often write the distribution as $p=(p(1),p(2),\ldots,p(n))$.

We use $\Rat[0,1]$ to denote the set of rational numbers between 0 and 1.
%The following notation will be convenient in Section~\ref{sec:asym} to state our lower bounds on the value. 
%For two functions $f(N)$ and $g(N)$, we write $f(N)\gtrsim g(N)$ if $f(N)\ge g(N)-o(N)$.
For conciseness, when considering triples $(i,j,k)\in \Int\times\Int\times \Int$
we will often write $(ijk)$ instead of $(i,j,k)$. 

%%%%%
\subsection{Algebraic complexity theory}\label{sub:prelim-alg}
%%%%%
This subsection presents the notions of algebraic complexity needed for this work.
We refer to, e.g.,  \cite{Blaser13, Burgisser+97} for a more detailed treatment.
In this subsection $\field$ denotes an arbitrary field. 

%%%%%%%%%%%%%%%%%%%%%%%%
\paragraph{Tensors.}
%%%%%%%%%%%%%%%%%%%%%%%%
Let $U$, $V$ and~$W$ be three finite-dimensional vector spaces over
$\field$. A tensor (also called a trilinear form)~$t$ on $(U,V,W)$ is an element in $U\otimes V\otimes W$.
If we fix bases $\{x_i\}$, $\{y_j\}$ and $\{z_k\}$ of $U$, $V$ and $W$, respectively,
then $t$ can be written as
\[
t=\sum_{i,j,k}\gamma_{ijk}\:x_i\otimes y_j\otimes z_k
\]
for coefficients $\gamma_{ijk}$ in $\field$. We call $\{x_i\}$, $\{y_j\}$ and $\{z_k\}$ the $x$-variables, $y$-variables, and $z$-variables, respectively.

Matrix multiplication of an $m\times n$ matrix with entries in~$\field$ by an $n\times p$ matrix
with entries in~$\field$
corresponds to the following tensor on $(\field^{m\times n},\field^{n\times p},\field^{m\times p})$:
\[
\sum_{r=1}^m\sum_{s=1}^n\sum_{t=1}^p x_{rs}\otimes y_{st}\otimes z_{rt}.
\]
%This tensor will be denoted by $\braket{m,n,p}$. Another important example is the form 
%\[
%\sum_{\ell=1}^n x_\ell y_\ell z_\ell.
%\] 
%This tensor on $(\field^n,\field^n,\field^n)$ is denoted~$\braket{n}$ and corresponds to $n$ independent scalar products.

%Given a tensor $t\in U\otimes V\otimes W$, it will be convenient to 
%denote by $t_{\mathsf C}$ and $t_{\mathsf C^2}$ the tensors in $V\otimes W\otimes U$
%and $W\otimes U\otimes V$, respectively, obtained by permuting cyclicly the coordinates
%of $t$:
%\[
%t_{\mathsf C}=\sum_{ijk}t_{ijk}\:y_j\otimes z_k\otimes x_i,\hspace{5mm}t_{\mathsf C^2}=\sum_{ijk}t_{ijk}\:z_k\otimes x_i\otimes y_j.
%\]

Given two tensors
$t\in U\otimes V\otimes W$ and $t'\in U'\otimes V'\otimes W'$, their direct sum $t\oplus t'$ is a tensor in 
$(U\oplus U')\otimes (V\oplus V')\otimes (W\oplus W')$, and their tensor product is a tensor in 
$(U\otimes U')\otimes (V\otimes V')\otimes (W\otimes W')$. 
For any integer $e\ge 1$, the tensor $t\oplus\cdots\oplus t$ (with~$e$ occurrences of $t$) 
will be denoted
by $e\cdot t$ and 
the tensor $t\otimes\cdots \otimes t$ (with $e$ occurrences of~$t$)
will be denoted by $t^{\otimes e}$.

\paragraph{Degeneration, combinatorial restriction and border rank.}
The most general way to convert a tensor to another is via the concept of \emph{degeneration}. In this paper, we write $t'\degen t$ to denote that $t'$ is a degeneration of $t$.  Since we will almost never use this general concept (we only use it below to give the formal definition of the concepts border rank), we do not give the definition and instead refer the interested reader to, e.g., \cite{Blaser13}. 

In this paper, we will use a special kind of degeneration called combinatorial restriction (also called zeroing out), which has also been used in most recent works on matrix multiplications based on the laser method \cite{Alman+SODA21,Coppersmith+90,Davie+13,Duan+23,LeGall14,Stothers10,WilliamsSTOC12}. We say that a tensor $t'$ is a combinatorial restriction of~$t$ if $t'$ can be obtained from $t$ by zeroing out variables (i.e., setting some variables of $t$ to zero). 

The notion of degeneration can be used to define the notion of border rank of a tensor $t$, 
denoted $\underline{R}(t)$, as follows:
\[
\underline{R}(t)=\min\{{r\in\Nat\:|\: t\:\degen \:r \cdot\braket{1,1,1}\}}.
\]
The border rank can be used to give a formal definition of the exponent of matrix multiplication $\omega$ and more generally the exponent of rectangular matrix multiplication $\omega(\kappa)$ for any $\kappa\ge 0$:
\begin{align*}
\omega&=\inf\{\tau\in\Real\:|\:\underline{R}(\braket{n,n,n})=O(n^{\tau})\}\\
\omega(\kappa)&=\inf\{\tau\in\Real\:|\:\underline{R}\left(\braket{n,n,\ceil{n^k}}\right)=O(n^{\tau})\}.
\end{align*}

The border rank is submultiplicative: $\underline{R}(t\otimes t')\le \underline{R}(t)\times \underline{R}(t')$
for any two tensors $t$ and $t'$. This is the only property of the border rank we will use in this paper.

\paragraph{The asymptotic sum inequality.}
%%%%%%%%%%%%%%%%%%%%%
\hspace{-3mm} A powerful tool to derive upper bounds on $\omega(\kappa)$ is Sch{\"o}nhage asymptotic sum inequality \cite{Schonhage81}, which has been used in essentially all works on the exponent of square and rectangular matrix multiplication since its discovery in 1981. Here is the version we will use in this paper, which has also been used in prior works in rectangular matrix multiplication \cite{CoppersmithSICOMP82,Coppersmith97,Huang+98,Ke+08,LeGallFOCS12,LeGall+SODA18}.
\begin{proposition}\label{th:value}
Let $t$ be a tensor and $\kappa$ be a non-negative real number. If $t$ can be converted by a combinatorial restriction into a direct sum of $r$ terms, each isomorphic to $\braket{m,m,m^s}$ for some $s\ge \kappa$, then the following inequality holds: 
\[
r\cdot m^{\omega(\kappa)}\le \underline{R}(t).
\]
\end{proposition}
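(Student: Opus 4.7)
I would follow the classical proof of Sch\"onhage's asymptotic sum inequality~\cite{Schonhage81}, adapted to rectangular matrix multiplication as in prior works such as \cite{CoppersmithSICOMP82,Coppersmith97,LeGall+SODA18}. The plan has three steps.

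First, I would amplify via tensor powers. For any integer $N\geq 1$, combinatorial restriction is preserved under the tensor product (since zeroing out of variables commutes with $\otimes$), so $t^{\otimes N}$ combinatorial-restricts to $r^N$ disjoint copies of $\braket{m,m,m^s}^{\otimes N}\cong \braket{m^N,m^N,m^{sN}}$. Combined with the submultiplicativity of the border rank, this already gives
\[
\underline{R}\bigl(r^N \cdot \braket{m^N,m^N,m^{sN}}\bigr) \leq \underline{R}(t)^N.
\]

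Second, I would invoke Sch\"onhage's rotation trick to absorb the multiplicity factor $r^N$ into a single large matrix multiplication tensor. Because the border rank is invariant under cyclic permutations of the three factors of a matrix multiplication tensor, the analogous bounds also hold for $r^N\cdot\braket{m^{sN},m^N,m^N}$ and $r^N\cdot\braket{m^N,m^{sN},m^N}$. Suitable tensor products and combinatorial restrictions of these three rotated direct sums produce a single rectangular matrix multiplication tensor $\braket{n,n,n^{s'}}$ with $s' \geq \kappa$ and $n$ growing geometrically in $N$, whose border rank is bounded by $\underline{R}(t)^{cN}$ for an appropriate constant $c$. Applying the border-rank definition of $\omega(s)$ to this absorbed tensor yields, after bookkeeping, the key inequality
\[
r^N \cdot m^{N\omega(s)} \leq \underline{R}(t)^{N}(1+o(1))
\]
as $N\to\infty$.

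Finally, I would take $N$-th roots and pass to the limit $N\to\infty$ to obtain $r\cdot m^{\omega(s)}\leq \underline{R}(t)$. Since $\omega(\cdot)$ is monotone non-decreasing in its argument (padding with zero rows/columns reduces a rectangular matrix product with aspect ratio $s$ to one with aspect ratio $\kappa\leq s$), we have $\omega(s)\geq \omega(\kappa)$, and hence $r\cdot m^{\omega(\kappa)}\leq \underline{R}(t)$, as claimed.

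The main obstacle is the second step, namely absorbing the combinatorial factor $r^N$ into a single matrix multiplication tensor without invoking the (open) direct sum conjecture $\underline{R}(T\oplus T')=\underline{R}(T)+\underline{R}(T')$. This is precisely the role of Sch\"onhage's rotation trick: it exploits the $\mathbb{Z}/3\mathbb{Z}$-symmetry of the matrix multiplication tensor (together with submultiplicativity of the border rank under tensor product) to asymptotically convert a direct sum of many small matrix multiplications into a single large one whose border rank is controlled via the definition of $\omega(\kappa)$.
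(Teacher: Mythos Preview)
The paper does not prove this proposition; it is quoted as the (rectangular) asymptotic sum inequality and attributed to Sch\"onhage~\cite{Schonhage81} and to prior works on rectangular matrix multiplication \cite{CoppersmithSICOMP82,Coppersmith97,Huang+98,Ke+08,LeGallFOCS12,LeGall+SODA18}. There is thus no in-paper proof to compare against.

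Your plan is reasonable in outline, but step~2 does not do what you need in the rectangular setting. Tensoring the three cyclic rotations of $r^N\cdot\braket{m^N,m^N,m^{sN}}$ produces the \emph{square} direct sum $r^{3N}\cdot\braket{m^{(s+2)N},m^{(s+2)N},m^{(s+2)N}}$, and after absorbing the multiplicity (this absorption itself still requires a Strassen-type recursion, not just ``tensor products and combinatorial restrictions'') one obtains $r\cdot m^{(s+2)\omega/3}\le \underline{R}(t)$, a statement about $\omega=\omega(1)$ rather than about $\omega(s)$. Since one always has $\omega(s)\ge (s+2)\omega/3$ (symmetrize $\braket{m,m,m^s}$ to see this), with strict inequality for $s\neq 1$ when $\omega<3$, the bound you reach is strictly weaker than the proposition whenever $\kappa>1$. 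For instance at $\kappa=s=2$ you would get $r\cdot m^{4\omega/3}\le\underline{R}(t)$, which does not imply $r\cdot m^{\omega(2)}\le\underline{R}(t)$ because $\omega(2)>4\omega/3$.

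The standard proof of the rectangular asymptotic sum inequality does not rotate. From $\underline{R}(r\cdot\braket{m,m,m^s})\le R:=\underline{R}(t)$ one argues directly by Strassen-style recursion on the fixed shape $\braket{m,m,m^s}$: using the border-rank-$R$ algorithm for $r\cdot\braket{m,m,m^s}$ at the outer level and batching the $R$ resulting inner $\braket{m^{t-1},m^{t-1},m^{s(t-1)}}$-products in groups of~$r$, one computes $r$ copies of $\braket{m^t,m^t,m^{st}}$ with about $(R/r)^{t-1}R$ bilinear products. Letting $t\to\infty$ gives $\omega(s)\le\log(R/r)/\log m$, i.e., $r\cdot m^{\omega(s)}\le R$, and then your step~3 (monotonicity of $\omega(\cdot)$) finishes the argument. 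So steps~1 and~3 of your plan are fine; step~2 should be replaced by this direct recursion rather than the rotation trick.
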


\paragraph{$\boldsymbol{\Ccc}$-tensors.}

%Finally, we will need the concept of decomposition of a tensor.
The concept of $\Ccc$-tensor was introduced by Strassen \cite{Strassen87}. Our treatment follows mainly \cite[Section 15.6]{Burgisser+97}. A $\Ccc$-tensor is a tensor that has an outer structure (called the support) isomorphic to a tensor. In this work, we will only consider the case where the outer structure is isomorphic to a matrix multiplication tensor.

Let $t\in U\otimes V\otimes W$ be a tensor. We say that $t$ is a $\Ccc$-tensor with support $\braket{e,h,\ell}$ if $U$, $V$ and $W$ can be decomposed as direct sums of subspaces 
$$
U=\bigoplus_{i=1}^e\bigoplus_{j=1}^h U_{i,j}
\hspace{3mm}
V=\bigoplus_{j=1}^h\bigoplus_{k=1}^\ell V_{j,k},
\hspace{3mm}
W=\bigoplus_{i=1}^e\bigoplus_{k=1}^\ell W_{i,k},
$$
and
$t$ can be written as 
$$t=\sum_{i=1}^e\sum_{j=1}^h\sum_{k=1}^\ell t_{ijk}$$ 
where each $t_{ijk}$ is a tensor in $U_{i,j}\otimes V_{j,k}\otimes W_{i,k}$.
The $t_{ijk}$'s are called the components of~$t$. 

As a simple example, consider the tensor 
\begin{equation}\label{eq:Ctensor}
t=t_{111}+t_{112} \:\:\:\textrm{ with }\:\:\:
t_{111}=\sum_{i=1}^p x_0\otimes y_i\otimes z_i \:\:\:\textrm{ and }\:\:\:
t_{112}=\sum_{i=1}^p x_0\otimes y'_i\otimes z'_i.
\end{equation}
By taking the decomposition $U=U_{1,1}$, $V=V_{1,1}\oplus V_{1,2}$ and $W=W_{1,1}\oplus W_{1,2}$, where $U_{1,1}$ has basis $\{x_0\}$, $V_{1,1}$ has basis $\{y_{i}\}$, $V_{1,2}$ has basis $\{y'_{i}\}$, $W_{1,1}$ has basis $\{z_{i}\}$, and $W_{1,2}$ has basis $\{z'_{i}\}$, we observe that this tensor is a $\Ccc$-tensor with support $\braket{1,1,2}$ in which each component is isomorphic to $\braket{1,1,p}$.

We first mention that the concept of $\Ccc$-tensor is preserved by the tensor product.
\begin{proposition}[Section 15.6 in \cite{Burgisser+97}]\label{prop:Ct}
Let $t$ be a $\Ccc$-tensor with support $\braket{e,h,\ell}$ in which each component is isomorphic to $\braket{m,n,p}$. Let $t'$ be a $\Ccc$-tensor with support $\braket{e',h',\ell'}$ in which each component is isomorphic to $\braket{m',n',p'}$. Then $t\otimes t'$ is a $\Ccc$-tensor with support $\braket{ee',hh',\ell\ell'}$ in which each component is isomorphic to $\braket{mm',nn',pp'}$.
\end{proposition}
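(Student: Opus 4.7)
The plan is to prove the statement by a direct unwinding of the definition of a $\Ccc$-tensor combined with the distributivity of $\otimes$ over $\oplus$. Write $t=\sum_{i,j,k}t_{ijk}$ with $t_{ijk}\in U_{i,j}\otimes V_{j,k}\otimes W_{i,k}$, where the indices $(i,j,k)$ range over $\set{1}{e}\times\set{1}{h}\times\set{1}{\ell}$, and analogously $t'=\sum_{i',j',k'}t'_{i'j'k'}$ with $t'_{i'j'k'}\in U'_{i',j'}\otimes V'_{j',k'}\otimes W'_{i',k'}$. The goal is to produce a $\Ccc$-structure on $t\otimes t'$ whose outer indices are pairs of the original ones.

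First, I would introduce composite indices $I=(i,i')$, $J=(j,j')$, $K=(k,k')$ taking values in the Cartesian products $\set{1}{e}\times\set{1}{e'}$, $\set{1}{h}\times\set{1}{h'}$, $\set{1}{\ell}\times\set{1}{\ell'}$, which after an arbitrary bijective relabeling have sizes $ee'$, $hh'$ and $\ell\ell'$. Define subspaces
\[
\tilde U_{I,J}:=U_{i,j}\otimes U'_{i',j'},\quad \tilde V_{J,K}:=V_{j,k}\otimes V'_{j',k'},\quad \tilde W_{I,K}:=W_{i,k}\otimes W'_{i',k'}.
\]
By distributivity of the tensor product over direct sums,
\[
U\otimes U'=\Bigl(\bigoplus_{i,j}U_{i,j}\Bigr)\otimes\Bigl(\bigoplus_{i',j'}U'_{i',j'}\Bigr)=\bigoplus_{I,J}\tilde U_{I,J},
\]
and similarly $V\otimes V'=\bigoplus_{J,K}\tilde V_{J,K}$ and $W\otimes W'=\bigoplus_{I,K}\tilde W_{I,K}$, giving a legitimate outer decomposition of the ambient spaces.

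Second, I would expand $t\otimes t'=\sum_{i,j,k}\sum_{i',j',k'}t_{ijk}\otimes t'_{i'j'k'}$ and observe that the canonical reordering of six tensor factors,
\[
(U_{i,j}\otimes V_{j,k}\otimes W_{i,k})\otimes (U'_{i',j'}\otimes V'_{j',k'}\otimes W'_{i',k'})\cong \tilde U_{I,J}\otimes \tilde V_{J,K}\otimes \tilde W_{I,K},
\]
places the summand indexed by $(I,J,K)$ inside exactly the right component slot. This is precisely the $\Ccc$-tensor condition for $t\otimes t'$ with support $\braket{ee',hh',\ell\ell'}$. Finally, since $t_{ijk}\cong\braket{m,n,p}$ and $t'_{i'j'k'}\cong\braket{m',n',p'}$, the standard isomorphism $\braket{m,n,p}\otimes\braket{m',n',p'}\cong\braket{mm',nn',pp'}$ (which reflects the block decomposition of a product of $mm'\times nn'$ by $nn'\times pp'$ matrices) identifies each component of $t\otimes t'$ with $\braket{mm',nn',pp'}$, as claimed.

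There is no substantial obstacle here: the argument is a routine bookkeeping exercise in the definitions. The only point requiring a modicum of care is the reindexing of the composite outer indices $(i,i')$, $(j,j')$, $(k,k')$ as single indices ranging over sets of sizes $ee'$, $hh'$, $\ell\ell'$, and verifying that the resulting direct-sum decomposition matches the one demanded by the definition of support $\braket{ee',hh',\ell\ell'}$; this is immediate from a counting argument.
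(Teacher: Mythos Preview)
Your argument is correct and is the standard proof of this fact. Note, however, that the paper does not actually give its own proof of this proposition: it is stated with a citation to Section~15.6 of B\"urgisser--Clausen--Shokrollahi and used as a black box. Your write-up is exactly the routine verification one finds in that reference, so there is nothing to compare.
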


The following proposition shows when both the support and the components are matrix multiplication tensors of the type $\braket{1,1,\cdot}$, the whole tensor is isomorphic to a matrix multiplication tensor of type $\braket{1,1,\cdot}$ as well.
\begin{proposition}\label{prop:Ct-mm}
Let $T$ be a $\Ccc$-tensor with support 
$\braket{1,1,\ell}$ in which each component is isomorphic to $\braket{1,1,p}$. Then $T\cong\braket{1,1,\ell p}$.
\end{proposition}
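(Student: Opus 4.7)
The plan is a direct unpacking of definitions---there is no deep idea involved. First, because the support $\braket{1,1,\ell}$ has $e=h=1$, the decomposition of $U$ collapses to $U=U_{1,1}$, while $V=\bigoplus_{k=1}^\ell V_{1,k}$, $W=\bigoplus_{k=1}^\ell W_{1,k}$, and $T$ decomposes as $T=\sum_{k=1}^\ell T_{11k}$ with each $T_{11k}\in U_{1,1}\otimes V_{1,k}\otimes W_{1,k}$.

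Next I would apply the hypothesis that each component is isomorphic to $\braket{1,1,p}=\sum_{s=1}^p x\otimes y_s\otimes z_s$, whose $x$-slot is $1$-dimensional. This isomorphism furnishes, for each $k$, a nonzero vector $x^{(k)}\in U_{1,1}$ together with bases $\{y_s^{(k)}\}_{s=1}^p$ of $V_{1,k}$ and $\{z_s^{(k)}\}_{s=1}^p$ of $W_{1,k}$ such that
\[T_{11k}=x^{(k)}\otimes\sum_{s=1}^p y_s^{(k)}\otimes z_s^{(k)}.\]
Under the natural convention for $\Ccc$-tensors illustrated by \eqref{eq:Ctensor} (where $U_{1,1}$ is spanned by the single vector $x_0$), the space $U_{1,1}$ is itself $1$-dimensional, so all of the $x^{(k)}$ are collinear. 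Absorbing the resulting scalars into the $y$- or $z$-basis, I may assume $x^{(k)}=x_0$ for every $k$.

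With this normalization, summing over $k$ yields
\[T=x_0\otimes\sum_{k=1}^\ell\sum_{s=1}^p y_s^{(k)}\otimes z_s^{(k)}.\]
Because the subspaces $V_{1,k}$ sit in direct sum (and similarly for the $W_{1,k}$), the $\ell p$ vectors $\{y_s^{(k)}\}_{k,s}$ form a basis of $V$ and the $\ell p$ vectors $\{z_s^{(k)}\}_{k,s}$ form a basis of $W$. Relabelling these pairs as $(y_t,z_t)_{t=1}^{\ell p}$ displays $T$ as $\sum_{t=1}^{\ell p} x_0\otimes y_t\otimes z_t$, which is the canonical form of $\braket{1,1,\ell p}$. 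The only step requiring any care is the justification that the $x^{(k)}$ share a common direction in $U_{1,1}$; this is forced by the choice of support $\braket{1,1,\ell}$, whose single $x$-variable is common to all components and which in the $\Ccc$-tensor decomposition translates to $U_{1,1}$ contributing exactly one effective direction to each component. Everything else is straightforward book-keeping with the direct-sum decompositions.
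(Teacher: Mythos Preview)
Your proof is correct and follows essentially the same approach as the paper's: write $T=\sum_{k=1}^\ell T_{11k}$ with each $T_{11k}=\sum_{s=1}^p x_0\otimes y_s^{(k)}\otimes z_s^{(k)}$, observe that the $y$- and $z$-variables from different $k$ are distinct by the direct-sum structure, and read off $T\cong\braket{1,1,\ell p}$. If anything you are more explicit than the paper about the one delicate point---why all components share the same $x$-direction---which the paper's proof simply takes for granted by writing $x_0$ throughout.
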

\begin{proof}
A $\Ccc$-tensor $T$ with support 
$\braket{1,1,\ell}$ in which each component is isomorphic to $\braket{1,1,p}$ can be written as
\[
T=\sum_{k=1}^\ell T_{11k} \:\:\:\textrm{ with }\:\:\:
T_{11k}=\sum_{i=1}^p x_0\otimes y_i^{k}\otimes z_i^{k} \:\:\textrm{ for all } \:\:k\in\set{1}{\ell},
\]
for $\ell p$ distinct $y$-variables $y_{i}^k$ and $\ell p$ distinct $z$-variables $z_{i}^k$ (Equation \eqref{eq:Ctensor} corresponds to the case $\ell=2$).
We thus have $T\cong\braket{1,1,\ell p}$.
\end{proof}

%======
\subsection{The Coppersmith-Winograd tensor}\label{sub:CW}
%======
For any positive integer $q$, the Coppersmith-Winograd tensor \cite{Coppersmith+90} is the tensor over $\field^{q+2}\otimes \field^{q+2}\otimes \field^{q+2}$ defined as
\begin{align*}
\tcw=\tcw^{[011]}+\tcw^{[101]}+\tcw^{[011]}+\tcw^{[002]}+\tcw^{[020]}+\tcw^{[200]},
\end{align*}
where 
\begin{flalign*}
\tcw^{[011]}=\sum_{i=1}^q x_0\otimes y_i\otimes z_i,&&
\tcw^{[101]}=\sum_{i=1}^q x_i\otimes y_0\otimes z_i,&&
\tcw^{[110]}=\sum_{i=1}^q x_i\otimes y_i\otimes z_0,&&\\
\tcw^{[002]}=x_0\otimes y_0\otimes z_{q+1},&&
\tcw^{[020]}=x_0\otimes y_{q+1}\otimes z_{0},&&
\tcw^{[200]}=x_{q+1}\otimes y_0\otimes z_0.&&
\end{flalign*}
Coppersmith and Winograd showed that $\underline{R}(\tcw)\le q+2$.

The square of this tensor, already studied in \cite{Coppersmith+90} (see also \cite{Alman+SODA21,Coppersmith+90,Davie+13,Duan+23,LeGall14,Stothers10,WilliamsSTOC12}), will be the starting block for our analysis of the fourth power. Define the set 
$$S_4=\{(ijk)\in\set{0}{4}^3 \mid i+j+k = 4\}.$$ 
By regrouping terms, we can write
$$\tcw^{\otimes 2} = \sum_{(ijk) \in S_4} \mathcal{T}_{ijk}$$
where
\begin{equation}\label{eq:decompT2}
\begin{split}
\mathcal{T}_{004}=&\tcw^{[002]}\otimes \tcw^{[002]},\\
\mathcal{T}_{013}=&\tcw^{[011]}\otimes \tcw^{[002]}+\tcw^{[002]}\otimes \tcw^{[011]},\\
\mathcal{T}_{022}=&\tcw^{[011]}\otimes \tcw^{[011]}+\tcw^{[002]}\otimes \tcw^{[020]}+\tcw^{[020]}\otimes \tcw^{[002]},\\
\mathcal{T}_{112}=&\tcw^{[011]}\otimes \tcw^{[101]}+\tcw^{[101]}\otimes \tcw^{[011]}+\tcw^{[002]}\otimes \tcw^{[110]}+\tcw^{[110]}\otimes \tcw^{[002]},
\end{split}
\end{equation}
and the other eleven terms are obtained by permuting the indices of the $x$-variables, the $y$-variables and $z$-variables in the above expressions. 

For any $(ijk) \in S_4 \setminus \{(112),(121),(211)\}$, the tensor $T_{ijk}$ represents a matrix product:
\begin{equation}\label{eq:term}
\begin{split}
\Tt_{004}\cong \Tt_{040}\cong \Tt_{400}\cong~&\langle 1,1,1\rangle,\\
\Tt_{013}\cong \Tt_{031}\cong~&\langle 1,1,2q\rangle,\\
\Tt_{103}\cong \Tt_{301}\cong~& \langle 2q,1,1\rangle,\\
\Tt_{130}\cong \Tt_{310}\cong~& \langle 1,2q,1\rangle,\\
\Tt_{022}\cong~&\langle 1,1,q^2+2\rangle,\\
\Tt_{202}\cong~&\langle q^2+2,1,1\rangle,\\
\Tt_{220}\cong~&\langle 1,q^2+2,1\rangle.
\end{split}
\end{equation}
%We thus get the bounds
%\begin{align*}
%\Val(\Tt_{004})=\Val(\Tt_{040})=\Val(\Tt_{400})&=0,\\
%\Val(\Tt_{013})=\Val(\Tt_{031})=\Val(\Tt_{103})=\Val(\Tt_{130})=\Val(\Tt_{301})=\Val(\Tt_{310})&
%\ge\frac{\rho}{3}\log(2q),\\
%\Val(\Tt_{022})=\Val(\Tt_{202})=\Val(\Tt_{220})&
%\ge\frac{\rho}{3} \log(q^2+2).
%\end{align*}
The other terms $\Tt_{112}$, $\Tt_{121}$ and $\Tt_{211}$ are not matrix multiplications, and require a more delicate analysis (an analysis tailored to our needs will be done in Section \ref{sub6}).

%=============
\section{Asymmetric Analysis of the Fourth Power}\label{sec:asym}
%=============
In this section we describe the asymmetric analysis of the fourth power from \cite{LeGall+SODA18}, which gives the upper bounds on $\omega(\kappa)$ reported in Table \ref{table_SODA18}. The analysis in Sections \ref{sub1}, \ref{sub2}, \ref{sub3}, \ref{sub4} and \ref{sub6} is the same as in \cite{LeGall+SODA18}, with only minor changes in the notation and presentation. In particular, we derive closed-form expressions for most quantities since this will be needed to analyze the combination loss in Sections \ref{sec:loss} and \ref{sec:rec}.
The analysis of the components in Section \ref{sub5}, on the other hand, is slightly different from the analysis in~\cite{LeGall+SODA18} since we explicitly state the expression of these components in terms of the parameters.\footnote{Ref.~\cite{LeGall+SODA18} did not introduce any parameters to analyze the components considered in Section \ref{sub5} since those components are isomorphic to matrix multiplication tensors and can then be analyzed in a straightforward way. In our work, however, we need to introduce parameters and reanalyze explicitly those components in terms of these parameters in order to calculate the combination loss in Sections \ref{sec:loss} and \ref{sec:rec}.}

%the optimal value of the component was found analytically.} 
%The presentation analysis of the subcomponents in \ref{sub6} is also slightly different than in \cite{LeGall+SODA18} since for the square case we can use tools (in particular, Proposition \ref{prop:Ct2}) that are not available when studying the rectangular case.

%===
\subsection{Decomposition into components}\label{sub1}
%===
Define the sets
\begin{align*}
S_8&=\left\{(ijk)\in\{0,\ldots,8\}^3\:\vert\: i+j+k=8\right\},\\
\bar S_8&=\left\{(ijk)\in\{1,\ldots,8\}^3\:\vert\: i+j+k=8 \right\},\\
S_8^\prec&=\left\{(ijk)\in\{0,\ldots,8\}^3\:\vert\: i+j+k=8\:\textrm{ and } i\le j\le k\right\},\\
\bar S^\prec_8&=\left\{(ijk)\in\{1,\ldots,8\}^3\:\vert\: i+j+k=8 \:\textrm{ and } i\le j\le k\right\}.
\end{align*}
%Note that $|S'_8|=25$. 
We decompose the fourth power of the Coppersmith-Winograd tensor as follows:
\[
\tcw^{\otimes 4}=\sum_{(ijk)\in S_8}T_{ijk}
\]
where
\begin{equation}\label{eq:decomposition}
T_{ijk}=\sum_{\substack{(abc),(a'b'c')\in S_4\\a+a'=i,\:b+b'=j,\:c+c'=k}}\Tt_{abc}\otimes \Tt_{a'b'c'}
\end{equation}
for each $(ijk)\in S_8$, where the $\Tt_{abc}$'s and $\Tt_{a'b'c'}$'s are defined in \eqref{eq:decompT2}. We call each $T_{ijk}$ a component, and call each $\Tt_{abc}\otimes \Tt_{a'b'c'}$ a subcomponent of $T_{ijk}$. 

%======
\subsection{Analysis of the first extraction}\label{sub2}
%======
For each $(ijk)\in S_8$ we introduce a variable $\alpha_{ijk}\in\Rat[0,1]$. We impose the following conditions
\begin{equation}\tag{C1}\label{eq:C1}
\sum_{(ijk)\in S_8} \alpha_{ijk}=1,
\end{equation}
\begin{equation}\tag{C2}\label{eq:C2}
\alpha_{ijk}=\alpha_{ikj} \:\:\textrm{ for all }\:\: (ijk)\in S_8,
\end{equation}
which reduce the number of free variables to 24. We additionally impose the following condition:
\begin{equation}\tag{C3}\label{eq:C3}
\begin{split}
	\alpha_{026}\alpha_{107}\alpha_{215} =& \alpha_{017}  \alpha_{125} \alpha_{206},\\
	\alpha_{026}\alpha_{107}\alpha_{611} = & \alpha_{017}  \alpha_{116} \alpha_{602},\\
	\alpha_{035} \alpha_{107} \alpha_{314}= & \alpha_{017}  \alpha_{134} \alpha_{305},\\
	\alpha_{044}\alpha_{107}\alpha_{413} = & \alpha_{017} \alpha_{134} \alpha_{404},\\
	\alpha_{035}\alpha_{107}\alpha_{503} = &\alpha_{017} \alpha_{125} \alpha_{512},\\
	\alpha_{035}\alpha_{107}\alpha_{116} \alpha_{224}=&\alpha_{017} \alpha_{125}  \alpha_{134} \alpha_{206},\\
	\alpha_{044}\alpha_{107} \alpha_{116}\alpha_{233}= &\alpha_{017} \alpha_{134} \alpha_{134} \alpha_{206},\\
	\alpha_{035} \alpha_{044} \alpha_{107} \alpha_{116}\alpha_{323}=  &\alpha_{017} \alpha_{026} \alpha_{134} \alpha_{134}\alpha_{305},\\
	\alpha_{044} \alpha_{035}\alpha_{107} \alpha_{116} \alpha_{323}= & \alpha_{017}  \alpha_{026} \alpha_{134} \alpha_{134} \alpha_{305},\\
	\alpha_{044} \alpha_{035} \alpha_{107} \alpha_{116}\alpha_{422}= &\alpha_{017} \alpha_{026} \alpha_{134}  \alpha_{125}\alpha_{404}.
\end{split}
\end{equation}

For each $\ell\in\set{0}{8}$, define 
\begin{align*}
A_\ell=\sum_{\substack{(ijk)\in S_8\\i=\ell}}\alpha_{ijk},\hspace{5mm}
B_\ell=\sum_{\substack{(ijk)\in S_8\\j=\ell}}\alpha_{ijk},\hspace{3mm}\textrm{and}\hspace{5mm}
C_\ell=\sum_{\substack{(ijk)\in S_8\\k=\ell}}\alpha_{ijk}.
\end{align*}
Under condition \eqref{eq:C1}, the set of parameters $\{\alpha_{ijk}\}$ can be considered as a probability distribution on the components.
We then define the probability distributions $\Aa,\Bb,\Cc\colon\set{0}{8}\to [0,1]$ as follows:
\begin{align*}
\Aa=&(A_0,A_1,A_2,A_3,A_4,A_5,A_6,A_7,A_8),\\
\Bb=&(B_0,B_1,B_2,B_3,B_4,B_5,B_6,B_7,B_8),\\
\Cc=&(C_0,C_1,C_2,C_3,C_4,C_5,C_6,C_7,C_8).
\end{align*}
These three probability distributions represent the marginal of (the indices of the) $x$-variables, $y$-variables and $z$-variables, respectively. As shown in \cite{LeGallFOCS12,LeGall+SODA18,WilliamsSTOC12}, Condition \eqref{eq:C3} ensures that the distribution $\{\alpha_{ijk}\}$ has maximum entropy among all distributions with marginals $\Aa,\Bb,\Cc$.

Note that Condition \eqref{eq:C2} implies that $B_\ell=C_\ell$ for all $\ell\in\set{0}{8}$, and thus $\Bb=\Cc$. We impose the condition 
\begin{equation}\tag{C4}\label{eq:C4}
\ent(\Aa)\ge\ent(\Bb).
\end{equation}

Section 3 of \cite{LeGall+SODA18} shows how to extract from the fourth power of the Coppersmith-Winograd tensor a direct sum of many tensors, each isomorphic to a tensor product of powers of components. More precisely, here is the main statement.\footnote{The proof in \cite[Section 3]{LeGall+SODA18} shows that $\tcw^{\otimes 4N}$ can be converted into a direct sum of 
\[
2^{\min\{\ent(\Aa),\ent(\Bb)\}(1-o(1))N}
\]
terms, each isomorphic to \eqref{eq:tensor}. We get our statement since $\min\{\ent(\Aa),\ent(\Bb)\}=\ent(\Bb)$, due to Condition \eqref{eq:C4}. Note that in \cite[Section 3]{LeGall+SODA18}, however, the condition adopted is $\ent(\Aa)\le\ent(\Bb)$, and thus $\min\{\ent(\Aa),\ent(\Bb)\}=\ent(\Aa)$ in the statement of Theorem 3.1 in \cite{LeGall+SODA18}.}
\begin{theorem}[Adapted from Theorem 3.1 in \cite{LeGall+SODA18}]\label{th:1}
For any set of parameters $\alpha_{ijk}\in\Rat[0,1]$ satisfying Conditions \eqref{eq:C1}, \eqref{eq:C2}, \eqref{eq:C3}, \eqref{eq:C4} and any large enough $N$, the tensor $\tcw^{\otimes 4N}$ can be converted by a combinatorial restriction into a direct sum of 
\[
2^{\ent(\Bb)(1-o(1))N}
\]
terms, each isomorphic to the tensor
\begin{equation}\label{eq:tensor}
\bigotimes_{(ijk)\in S_8}T_{ijk}^{\otimes \alpha_{ijk}N}.
\end{equation}
\end{theorem}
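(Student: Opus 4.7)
The plan is to adapt the standard laser-method extraction from \cite{Coppersmith+90}, in the asymmetric form developed by \cite{LeGall+SODA18}. The proof will proceed in three stages: a type-restriction step, a Salem–Spencer hashing step to enforce a direct-sum structure, and a counting step that invokes Condition \eqref{eq:C4}.

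\textbf{Stage 1 (type restriction).} Expanding $\tcw^{\otimes 4N}$ using the decomposition $\tcw^{\otimes 4}=\sum_{(ijk)\in S_8}T_{ijk}$ from Section \ref{sub1} gives
\[
\tcw^{\otimes 4N}=\sum_{((i_1j_1k_1),\ldots,(i_Nj_Nk_N))\in S_8^N} T_{i_1 j_1 k_1}\otimes\cdots\otimes T_{i_N j_N k_N}.
\]
I would zero out $x$-, $y$-, and $z$-variables of the outer $\tcw^{\otimes 4N}$ structure so as to restrict the sum to tuples whose empirical distribution over $S_8$ matches $\alpha$ (up to $o(1)$). After reordering tensor factors, every surviving term is isomorphic to \eqref{eq:tensor}, and there are exactly $\binom{N}{\{\alpha_{ijk}N\}}=2^{\ent(\alpha)N(1-o(1))}$ of them, where $\ent(\alpha)$ is the entropy of $\alpha$ viewed as a distribution on $S_8$. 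Condition \eqref{eq:C3} will be used here to ensure that $\alpha$ is the maximum-entropy distribution compatible with the marginals $\Aa$, $\Bb$, $\Cc$, which is what will make the later counting tight.

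\textbf{Stage 2 (hashing).} The surviving terms still share outer $x$-, $y$- and $z$-variables, so additional work is needed to isolate them into a direct sum. Each surviving term is labelled at the outer level by its $x$-trace $\mathbf{I}\in\set{0}{8}^N$ (having marginal type $\Aa$), its $y$-trace $\mathbf{J}$ (marginal type $\Bb$), and its $z$-trace $\mathbf{K}$ (marginal type $\Cc=\Bb$, by \eqref{eq:C2}). Following Strassen's Salem–Spencer trick as used in \cite{Strassen87,Coppersmith+90,LeGall+SODA18}, I would pick uniformly random hash values for $\mathbf{I}$ and $\mathbf{J}$ in $\Int/M_x$ and $\Int/M_y$ respectively, and zero out every variable whose hash does not equal a chosen target. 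With $M_x\approx 2^{\ent(\Aa)N}$ and $M_y\approx 2^{\ent(\Bb)N}$, a constant fraction of the surviving terms remains, and these terms partition into disjoint ``blocks'' such that within each block no two terms share an $x$-trace, a $y$-trace, or a $z$-trace; the three-AP-free codomain prevents cross-terms from surviving.

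\textbf{Stage 3 (counting, and the obstacle).} A standard counting argument shows that the number of surviving blocks is $2^{\min\{\ent(\Aa),\ent(\Bb)\}N(1-o(1))}$, and each block is isomorphic to \eqref{eq:tensor}. Condition \eqref{eq:C4} gives $\min\{\ent(\Aa),\ent(\Bb)\}=\ent(\Bb)$, which yields the claimed bound $2^{\ent(\Bb)N(1-o(1))}$. The main delicate point is making the Salem–Spencer argument quantitative in the asymmetric case ($M_x\neq M_y$) and verifying that enough blocks survive—this is exactly the extraction performed in \cite[Theorem 3.1]{LeGall+SODA18}, modulo swapping the roles of $\Aa$ and $\Bb$ to match our Condition \eqref{eq:C4} in place of theirs. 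So the plan is to invoke that theorem essentially verbatim, after recording this notational swap.
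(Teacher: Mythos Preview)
Your proposal is correct and matches the paper's approach. The paper does not give an independent proof of this theorem: it simply cites \cite[Theorem~3.1]{LeGall+SODA18}, which produces $2^{\min\{\ent(\Aa),\ent(\Bb)\}(1-o(1))N}$ copies of \eqref{eq:tensor}, and then observes (in a footnote) that Condition~\eqref{eq:C4} gives $\min\{\ent(\Aa),\ent(\Bb)\}=\ent(\Bb)$, with the roles of $\Aa$ and $\Bb$ swapped relative to the convention in \cite{LeGall+SODA18}---exactly what you record in Stage~3.
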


%==============
\subsection{Decomposition of components into subcomponents}\label{sub3}
%==============
Each component $T_{ijk}$ decomposes into subcomponents, as shown in Equation \eqref{eq:decomposition}. In this subsection, we introduce parameters to analyze this decomposition.

\paragraph{Definition of the parameters.}
For each component $T_{ijk}$, we introduce one parameter per subcomponent. The parameters for all components $T_{ijk}$ for $(ijk)\in S_8^\prec$ are shown in Table \ref{table:comp}. Note that in order to exploit the symmetries, we reduce the number of parameters as follows:
\begin{itemize}
\item
We systematically use the same parameter for two subcomponents $\Tt_{abc}\otimes \Tt_{a'b'c'}$ and $\Tt_{a'b'c'}\otimes \Tt_{abc}$ (when these subcomponents are distinct, i.e., when $(abc)\neq (a'b'c')$) of the same component. 
For instance, for the component
\[
T_{035}=\Tt_{004}\otimes \Tt_{031}+\Tt_{031}\otimes \Tt_{004}+\Tt_{013}\otimes \Tt_{022}+\Tt_{022}\otimes \Tt_{013},
\] 
instead of introducing four parameters, we introduce only two parameters $g_{0351},g_{0352}$ and assign $\frac{1}{2}g_{0351}$, $\frac{1}{2}g_{0351}$, $\frac{1}{2}g_{0352}$, $\frac{1}{2}g_{0352}$, respectively, to each of the four subcomponents.
\item
For the component $T_{224}$, we additionally require that the value of the parameter for the subcomponent $\Tt_{013}\otimes \Tt_{211}$ is equal to the value of the parameter for the subcomponent $\Tt_{103}\otimes \Tt_{121}$ (both are set to $\frac{1}{2}g_{2242}$). This condition ensures that the assignment of the parameters does not change when permuting the $x$-variables and $y$-variables of $T_{224}$ (which is needed to exploit the symmetry since we label this component with ``$224$'').
\item
For the component $T_{233}$, we additionally require that the value of the parameter for the subcomponent $\Tt_{013}\otimes \Tt_{220}$ is equal to the value of the parameter for the subcomponent $\Tt_{202}\otimes \Tt_{031}$ (both are set to $\frac{1}{2}g_{2331}$). This condition ensures that the assignment of the parameters does not change when permuting the $y$-variables and $z$-variables of $T_{233}$ (which is needed to exploit the symmetry since we label this component with ``$233$").
\end{itemize}
The parameters for all components $T_{ijk}$ with $(ijk)\in S_8\setminus S_8^\prec$ are obtained by permuting the indices. For instance, the four subcomponents of the component
\[
T_{503}=\Tt_{400}\otimes \Tt_{103}+\Tt_{103}\otimes \Tt_{400}+\Tt_{301}\otimes \Tt_{202}+\Tt_{202}\otimes \Tt_{301},
\]
are assigned parameters $\frac{1}{2}g_{5031}$, $\frac{1}{2}g_{5031}$, $\frac{1}{2}g_{5032}$ and $\frac{1}{2}g_{5032}$, respectively.

%%%%%%%%%%%%%%%%%
\begin{table}[!hbp]
\begin{center}
\begin{tabular}{|c||c|}
\hline
\multirow{4}{*}{$T_{008}$}&
 00\\
 &00\\
 &44\\
 \cline{2-2}
 &$g_{0081}$\\
\hline
\end{tabular}
\hspace{5mm}
\begin{tabular}{|c||cc|}
\hline
\multirow{4}{*}{$T_{017}$}&
 00&00\\
 &01&10\\
 &43&34\\
 \cline{2-3}
 &$\frac{1}{2}g_{0171}$&$\frac{1}{2}g_{0171}$\\
\hline
\end{tabular}
\hspace{5mm}
\begin{tabular}{|c||cc|c|}
\hline
\multirow{4}{*}{$T_{026}$}&
 00&00&00\\
 &02&20&11\\
 &42&24&33\\
 \cline{2-4}
 &$\frac{1}{2}g_{0261}$&$\frac{1}{2}g_{0261}$&$g_{0262}$\\
\hline
\end{tabular}

\vspace{4mm}
\begin{tabular}{|c||cc|cc|}
\hline
\multirow{4}{*}{$T_{035}$}&
 00&00&00&00\\
 &03&30&12&21\\
 &41&14&32&23\\
 \cline{2-5}
 &$\frac{1}{2}g_{0351}$&$\frac{1}{2}g_{0351}$&$\frac{1}{2}g_{0352}$&$\frac{1}{2}g_{0352}$\\
\hline
\end{tabular}
\hspace{5mm}
\begin{tabular}{|c||cc|cc|c|}
\hline
\multirow{4}{*}{$T_{044}$}&
 00&00&00&00&00\\
 &04&40&13&31&22\\
 &40&04&31&13&22\\
 \cline{2-6}
 &$\frac{1}{2}g_{0441}$&$\frac{1}{2}g_{0441}$&$\frac{1}{2}g_{0442}$&$\frac{1}{2}g_{0442}$&$g_{0443}$\\
\hline
\end{tabular}

\vspace{5mm}
\begin{tabular}{|c||cc|cc|}
\hline
\multirow{4}{*}{$T_{116}$}&
 01&10&01&10\\
 &01&10&10&01\\
 &42&24&33&33\\
 \cline{2-5}
 &$\frac{1}{2}g_{1161}$&$\frac{1}{2}g_{1161}$&$\frac{1}{2}g_{1162}$&$\frac{1}{2}g_{1162}$\\
\hline
\end{tabular}

\vspace{5mm}
\begin{tabular}{|c||cc|cc|cc|}
\hline
\multirow{4}{*}{$T_{125}$}&
 01&10&01&10&10&01\\
 &02&20&11&11&02&20\\
 &41&14&32&23&32&23\\
 \cline{2-7}
 &$\frac{1}{2}g_{1251}$&$\frac{1}{2}g_{1251}$&$\frac{1}{2}g_{1252}$&$\frac{1}{2}g_{1252}$&$\frac{1}{2}g_{1253}$&$\frac{1}{2}g_{1253}$\\
\hline
\end{tabular}

\vspace{5mm}
\begin{tabular}{|c||cc|cc|cc|cc|}
\hline
\multirow{4}{*}{$T_{134}$}&
  01&10&01&10&10&01&01&10\\
 &03&30&12&21&03&30&21&12\\
 &40&04&31&13&31&13&22&22\\
 \cline{2-9}
 &$\frac{1}{2}g_{1341}$&$\frac{1}{2}g_{1341}$&$\frac{1}{2}g_{1342}$&$\frac{1}{2}g_{1342}$&$\frac{1}{2}g_{1343}$&$\frac{1}{2}g_{1343}$&$\frac{1}{2}g_{1344}$&$\frac{1}{2}g_{1344}$\\
\hline
\end{tabular}

\vspace{5mm}
\begin{tabular}{|c||cc|cc|cc|cc|c|}
\hline
\multirow{4}{*}{$T_{224}$}&
 02&20&02&20&11&11&02&20&11\\
 &02&20&11&11&02&20&20&02&11\\
 &40&04&31&13&31&13&22&22&22\\
 \cline{2-10}
 &$\frac{1}{2}g_{2241}$&$\frac{1}{2}g_{2241}$&$\frac{1}{2}g_{2242}$&$\frac{1}{2}g_{2242}$&$\frac{1}{2}g_{2242}$&$\frac{1}{2}g_{2242}$&$\frac{1}{2}g_{2243}$&$\frac{1}{2}g_{2243}$&$g_{2244}$\\
\hline
\end{tabular}

\vspace{5mm}
\begin{tabular}{|c||cc|cc|cc|cc|cc|}
\hline
\multirow{4}{*}{$T_{233}$}&
  02&20&11&11& 02&20&20&02&11&11\\
 &12&21&03&30&21&12&03&30&12&21\\
 &30&03&30&03&21&12&21&12&21&12\\
 \cline{2-11}
 &$\frac{1}{2}g_{2331}$&$\frac{1}{2}g_{2331}$&$\frac{1}{2}g_{2332}$&$\frac{1}{2}g_{2332}$&$\frac{1}{2}g_{2333}$&$\frac{1}{2}g_{2333}$&$\frac{1}{2}g_{2331}$&$\frac{1}{2}g_{2331}$&$\frac{1}{2}g_{2334}$&$\frac{1}{2}g_{2334}$\\
\hline
\end{tabular}
\end{center}\vspace{-5mm}
\caption{Parameters for the components $T_{ijk}$ with $(ijk)\in S_8^\prec$. The first row shows the decomposition into subcomponents. The second row shows the parameters associated with each subcomponent.}\label{table:comp}
\end{table}
%%%%%%%%%%%%%%%%%

\paragraph{Conditions on the parameters.}
We impose the symmetry condition 
\begin{equation}\tag{D1}\label{eq:D1}
g_{ijk\ell}=g_{ikj\ell} \:\:\textrm{ for all }\:\: (ijk)\in S_8 \textrm{ and all } \ell.
\end{equation}
We thus have 64 free parameters $g_{ijk\ell}$. We require that for each component, the sum of all the parameters (for the component) is 1. This gives the following additional conditions on these parameters:
\begin{equation}\tag{D2}\label{eq:D2}
\begin{split}
g_{ijk1}&=1\:\:\textrm{ for all }\:(ijk)\in\{(008),(800),(017),(107),(701)\},\\
g_{ijk1}+g_{ijk2}&=1 \:\:\textrm{ for all }\:(ijk)\in\{(035),(305),(503),(026),\\
&\hspace{50mm}(206),(602),(116),(611)\},\\
g_{ijk1}+g_{ijk2}+g_{ijk3}&=1 \:\:\textrm{ for all }\:(ijk)\in\{(044),(404),(125),(215),(512)\},\\
g_{ijk1}+g_{ijk2}+g_{ijk3}+g_{ijk4}&=1 \:\:\textrm{ for all }\:(ijk)\in\{(134),(314),(413)\},\\
g_{ijk1}+2g_{ijk2}+g_{ijk3}+g_{ijk4}&=1 \:\:\textrm{ for all }\:(ijk)\in\{(224),(422)\},\\	
2g_{ijk1}+g_{ijk2}+g_{ijk3}+g_{ijk4}&=1 \:\:\textrm{ for all }\:(ijk)\in\{(233),(323)\}.
\end{split}
\end{equation}

\paragraph{Definition of the marginals.}
Under condition \eqref{eq:D2}, for each $(ijk)\in S_8$ the set of parameters $g_{ijk\ell}$'s can be considered as a probability distribution over the set of subcomponents of $T_{ijk}$. We denote this probability distribution by $g_{ijk}$. We define three probability distributions $\Aa_{ijk},\Bb_{ijk},\Cc_{ijk}\colon\set{1}{4}\to[0,1]$ corresponding to the marginals of the $x$-variables, the $y$-variables and the $z$-variables, respectively. Here are these distributions for all $(ijk)\in S_8^\prec$:

\begin{flalign*}
&
\left\{
\begin{array}{ll}
\Aa_{008}&=\left(1,0,0,0,0\right),\\
\Bb_{008}&=\left(1,0,0,0,0\right),\\
\Cc_{008}&=\left(0,0,0,0,1\right),
\end{array}
\right.
&&
\left\{
\begin{array}{ll}
\Aa_{017}&=\left(1,0,0,0,0\right),\\
\Bb_{017}&=\left(1/2,1/2,0,0,0\right),\\
\Cc_{017}&=\left(0,0,0,1/2,1/2\right),
\end{array}
\right.
&&\\
&
\left\{
\begin{array}{ll}
\Aa_{026}&=\left(1,0,0,0,0\right),\\
\Bb_{026}&=\left(\frac{g_{0261}}{2},g_{0262},\frac{g_{0261}}{2},0,0\right),\\
\Cc_{026}&=\left(0,0,\frac{g_{0261}}{2},g_{0262},\frac{g_{0261}}{2}\right),
\end{array}
\right.
&&
\left\{
\begin{array}{ll}
\Aa_{035}&=\left(1,0,0,0,0\right),\\
\Bb_{035}&=\left(\frac{g_{0351}}{2},\frac{g_{0352}}{2},\frac{g_{0352}}{2},\frac{g_{0351}}{2},0\right),\\
\Cc_{035}&=\left(0,\frac{g_{0351}}{2},\frac{g_{0352}}{2},\frac{g_{0352}}{2},\frac{g_{0351}}{2}\right),
\end{array}
\right.
\\
&
\left\{
\begin{array}{ll}
\Aa_{044}&=\left(1,0,0,0,0\right),\\
\Bb_{044}&=\left(\frac{g_{0441}}{2},\frac{g_{0442}}{2},g_{0443},\frac{g_{0442}}{2},\frac{g_{0441}}{2}\right),\\
\Cc_{044}&=\left(\frac{g_{0441}}{2},\frac{g_{0442}}{2},g_{0443},\frac{g_{0442}}{2},\frac{g_{0441}}{2}\right),
\end{array}
\right.
&&
\left\{
\begin{array}{ll}
\Aa_{116}&=\left(1/2,1/2,0,0,0\right),\\
\Bb_{116}&=\left(1/2,1/2,0,0,0\right),\\
\Cc_{116}&=\left(0,0,\frac{g_{1161}}{2},g_{1162},\frac{g_{1161}}{2}\right),
\end{array}
\right.
&&
\end{flalign*}\vspace{-10mm}

\begin{flalign*}
\left\{
\begin{array}{ll}
\Aa_{125}&=\left(1/2,1/2,0,0,0\right),\\
\Bb_{125}&=\left(\frac{g_{1251}+g_{1253}}{2},g_{1252},\frac{g_{1251}+g_{1253}}{2},0,0\right),\\
\Cc_{125}&=\left(0,\frac{g_{1251}}{2},\frac{g_{1252}+g_{1253}}{2},\frac{g_{1252}+g_{1253}}{2},\frac{g_{1251}}{2}\right),
\end{array}
\right.
&
&&
&&
\end{flalign*}\vspace{-10mm}

\begin{flalign*}
\left\{
\begin{array}{ll}
\Aa_{134}&=\left(1/2,1/2,0,0,0\right),\\
\Bb_{134}&=\left(\frac{g_{1341}+g_{1343}}{2},\frac{g_{1342}+g_{1344}}{2},\frac{g_{1342}+g_{1344}}{2},\frac{g_{1341}+g_{1343}}{2},0 \right),\\
\Cc_{134}&=\left(\frac{g_{1341}}{2},\frac{g_{1342}+g_{1343}}{2} ,g_{1344},\frac{g_{1342}+g_{1343}}{2},\frac{g_{1341}}{2}\right),
\end{array}
\right.
&&
\end{flalign*}\vspace{-10mm}

\begin{flalign*}
\left\{
\begin{array}{ll}
\Aa_{224}&=\left(\frac{g_{2241}+g_{2242}+g_{2243}}{2}, g_{2242}+g_{2244},\frac{g_{2241}+g_{2242}+g_{2243}}{2},0,0\right),\\
\Bb_{224}&=\left(\frac{g_{2241}+g_{2242}+g_{2243}}{2}, g_{2242}+g_{2244},\frac{g_{2241}+g_{2242}+g_{2243}}{2},0,0\right),\\
\Cc_{224}&=\left(\frac{g_{2241}}{2},g_{2242},g_{2243}+g_{2244},g_{2242},\frac{g_{2241}}{2}\right),
\end{array}
\right.
&&
\end{flalign*}\vspace{-10mm}

\begin{flalign*}
\left\{
\begin{array}{ll}
\Aa_{233}&=\left(g_{2331}+\frac{g_{2333}}{2}, g_{2332}+g_{2334}, g_{2331}+\frac{g_{2333}}{2},0,0\right),\\
\Bb_{233}&=\left(\frac{g_{2331}+g_{2332}}{2}, \frac{g_{2331}+g_{2353}+g_{2334}}{2},\frac{g_{2331}+g_{2353}+g_{2334}}{2},\frac{g_{2331}+g_{2332}}{2},0\right),\\
\Cc_{233}&=\left(\frac{g_{2331}+g_{2332}}{2},\frac{g_{2331}+g_{2333}+g_{2334}}{2} ,\frac{g_{2331}+g_{2333}+g_{2334}}{2},\frac{g_{2331}+g_{2332}}{2},0\right).
\end{array}
\right.
&&
\end{flalign*}

 By distributions $\Aa_{ijk}$, $\Bb_{ijk}$ and $\Cc_{ijk}$ for $(ijk)\in S_8\setminus S^\prec_8$ can be obtained by permuting the indices.

%==============
\subsection{Analysis of the second extraction: the components from $\boldsymbol{\bar S_8}$}\label{sub4}
%==============
In this subsection we explain how to analyze each component $T_{ijk}$ for $(ijk)\in \bar S_8$. 
%We divide the analysis in two: the components for $(ijk)\in \bar S_8$ and then the components for $(ijk)\in S_8\setminus \bar S_8$.

\paragraph{Definition of the functions $\boldsymbol{\varphi_{ijk}}$.}
For each $(ijk)\in \bar S_8$, we define a function $\varphi_{ijk}\colon S_4\to [0,1]$ using the variables $g_{ijk\ell}$. The following table gives the definition of $\varphi_{ijk}$ for each $(ijk)\in  S^\prec_8$ (the value $\varphi_{ijk}(abc)$ is written in the cell corresponding to the row labeled ``$\varphi_{ijk}$'' and the column labeled ``$abc$''; this value represents the ``weight'' of the tensor $\Tt_{abc}$ in the component $T_{ijk}$ with respect to the probability distribution $g_{ijk}$): 

\begin{center}
\begin{tabular}{|c|c|c|c|c|c|c|c|c|c|c|c|c|c|c|c|}
\cline{2-16}
\multicolumn{1}{c|}{}&\!\!004\!\!&\!040\!&\!400\!&\!\!013\!\!&\!\!103\!\!&\!\!031\!\!&\!\!130\!\!&\!301\!&\!310\!&\!\!022\!\!&\!\!202\!\!&\!\!220\!\!&\!\!112\!\!&\!\!121\!\!&\!\!211\\
\hline
$\varphi_{116}$&\!\!
$g_{1161}$\!\!&\!0\!&\!0\!&\!\!$g_{1162}$\!\!&\!\!$g_{1162}$\!\!&\!\!0\!\!&\!\!0\!\!&\!\!0\!\!&\!\!0\!\!&\!\!0\!\!&\!\!0\!\!&\!\!0\!\!&\!\!$g_{1161}$\!\!&\!\!0\!\!&\!\!0\\
\hline
$\varphi_{125}$&\!\!
$g_{1251}$\!\!&\!0\!&\!0\!&\!\!$g_{1252}$\!\!&\!\!$g_{1253}$\!\!&\!\!0\!\!&\!\!0\!\!&\!\!0\!\!&\!\!0\!\!&\!\!$g_{1253}$\!\!&\!\!0\!\!&\!\!0\!\!&\!\!$g_{1252}$\!\!&\!\!$g_{1251}$\!\!&\!\!0\\
\hline
$\varphi_{134}$&\!\!
$g_{1341}$\!\!&\!0\!&\!0\!&\!\!$g_{1342}$\!\!&\!\!$g_{1343}$\!\!&\!\!$g_{1343}$\!\!&\!\!$g_{1341}$\!\!&\!\!0\!\!&\!\!0\!\!&\!\!$g_{1344}$\!\!&\!\!0\!\!&\!\!0\!\!&\!\!$g_{1344}$\!\!&\!\!$g_{1342}$\!\!&\!\!0\\
\hline
$\varphi_{224}$&\!\!
$g_{2241}$\!\!&\!0\!&\!0\!&\!\!$g_{2242}$\!\!&\!\!$g_{2242}$\!\!&\!\!0\!\!&\!\!0\!\!&\!\!0\!\!&\!\!0\!\!&\!\!$g_{2243}$\!\!&\!\!$g_{2243}$\!\!&\!\!$g_{2241}$\!\!&\!\!2$g_{2244}$\!\!&\!\!$g_{2242}$\!\!&\!\!$g_{2242}$\\
\hline
$\varphi_{233}$&\!\!
0\!\!&\!0\!&\!0\!&\!\!$g_{2331}$\!\!&\!\!$g_{2332}$\!\!&\!\!$g_{2331}$\!\!&\!\!$g_{1332}$\!\!&\!\!0\!\!&\!\!0\!\!&\!\!$g_{2333}$\!\!&\!\!$g_{2331}$\!\!&\!\!$g_{2331}$\!\!&\!\!$g_{2334}$\!\!&\!\!$g_{2334}$\!\!&\!\!$g_{2333}$\\
\hline
\end{tabular}
\end{center}

By permuting the indices, we extend this definition $\varphi_{ijk}$ to all $(ijk)\in \bar S_8\setminus \bar S^\prec_8$. For instance, the function $\varphi_{413}$ is given in the following table.
\begin{center}
\begin{tabular}{|c|c|c|c|c|c|c|c|c|c|c|c|c|c|c|c|}
\cline{2-16}
\multicolumn{1}{c|}{}&\!\!400\!\!&\!004\!&\!040\!&\!\!301\!\!&\!\!310\!\!&\!\!103\!\!&\!\!013\!\!&\!130\!&\!031\!&\!\!202\!\!&\!\!220\!\!&\!\!022\!\!&\!\!211\!\!&\!\!112\!\!&\!\!121\\
\hline
$\varphi_{413}$&\!\!
$g_{4131}$\!\!&\!0\!&\!0\!&\!\!$g_{4132}$\!\!&\!\!$g_{4133}$\!\!&\!\!$g_{4133}$\!\!&\!\!$g_{4131}$\!\!&\!\!0\!\!&\!\!0\!\!&\!\!$g_{4134}$\!\!&\!\!0\!\!&\!\!0\!\!&\!\!$g_{4134}$\!\!&\!\!$g_{4132}$\!\!&\!\!0\\
\hline
\end{tabular}
\end{center}

\paragraph{Conditions.}
We impose the conditions
\begin{equation}\tag{D3}\label{eq:D3}
\sum_{(ijk)\in  \bar S_8}\alpha_{ijk}\ent(\Aa_{ijk})\ge 
\sum_{(ijk)\in \bar S_8}\alpha_{ijk}\ent(\Bb_{ijk}),
\end{equation}
\begin{equation}\tag{D4}\label{eq:D4}
\begin{split}
	g_{2333} \sqrt{g_{2332}}=&g_{2331}\sqrt{g_{2334}},\\
	g_{3233}\sqrt{g_{3232}}=&g_{3231}\sqrt{g_{3234}}.
\end{split}
\end{equation}
Condition \eqref{eq:D3} is similar to Condition \eqref{eq:C4} for the first extraction. Note however that this is a global condition on the sum of the entropies for $(ijk)\in \bar S_8$, rather than a condition on each $(ijk)$. 

Condition \eqref{eq:D4} is similar to Condition \eqref{eq:C3} for the first extraction: as shown in \cite{LeGall+SODA18}, this condition ensures that the distributions $g_{233}$ and $g_{323}$ have maximum entropy among all distributions with marginals $\Aa_{233},\Bb_{233},\Cc_{233}$ and $\Aa_{323},\Bb_{323},\Cc_{323}$, respectively. (For all $(ijk)\in \bar S_{8}\setminus\{(233),(232),$ $(322)\}$, there is no need to impose such a condition  since the distribution $g_{ijk}$ necessarily has maximum entropy.) 

\paragraph{Analysis of the value.}
Theorem 4.1 in \cite{LeGall+SODA18} gives the following statement.

\begin{theorem}[Adapted from Theorem 4.1 in \cite{LeGall+SODA18}]\label{th:2}
For any set of parameters $\alpha_{ijk},g_{ijk\ell}\in\Rat[0,1]$ satisfying conditions \eqref{eq:D1}, \eqref{eq:D2}, \eqref{eq:D3}  and \eqref{eq:D4}, and any large enough $N$, the tensor
\[
\bigotimes_{(ijk)\in \bar S_8}T_{ijk}^{\otimes \alpha_{ijk}N}
\]
can be converted by a combinatorial restriction into a direct sum of 
\[
\prod_{(ijk)\in  \bar S_8}2^{\alpha_{ijk}\ent(\Bb_{ijk})(1-o(1))N}
\]
terms, each isomorphic to the tensor 
\begin{equation}\label{eq:tensor2}
\bigotimes_{(ijk)\in \bar S_8} \bigotimes_{(abc)\in S_4}\Tt_{abc}^{\otimes \varphi_{ijk}(abc)\alpha_{ijk}N}.
\end{equation}
\end{theorem}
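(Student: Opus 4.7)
The plan is to follow the global laser-method strategy of \cite{LeGall+SODA18}, viewing the tensor $\bigotimes_{(ijk)\in \bar S_8}T_{ijk}^{\otimes \alpha_{ijk}N}$ as a single $\Ccc$-tensor whose support is the tensor product of the supports of the individual factors. By iterated application of Proposition~\ref{prop:Ct}, the components of this big $\Ccc$-tensor are indexed by a joint sequence that, for every $(ijk)\in \bar S_8$ and every one of the $\alpha_{ijk}N$ copies of $T_{ijk}$, selects a subcomponent $\Tt_{abc}\otimes \Tt_{a'b'c'}$ of $T_{ijk}$ according to the decomposition \eqref{eq:decomposition}. Each such joint sequence corresponds to a tensor product of smaller $\Tt_{abc}$'s, and the sequences we care about are precisely those in which each subcomponent indexed by $\ell$ in $T_{ijk}$ appears about $g_{ijk\ell}\alpha_{ijk}N$ times; after regrouping, any such typical sequence gives a tensor isomorphic to \eqref{eq:tensor2}.

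First I would carry out a standard typicality / multinomial estimate to show that, for each $(ijk)$, there are roughly $2^{\ent(g_{ijk})\alpha_{ijk}N}$ joint sequences that realize the target subcomponent frequencies. Condition \eqref{eq:D4} plays here the role Condition \eqref{eq:C3} played in the first extraction: together with \eqref{eq:D1} and \eqref{eq:D2} it guarantees that, for the two asymmetric components $T_{233}$ and $T_{323}$, the distribution $g_{ijk}$ achieves the maximum entropy among all distributions on subcomponents with the prescribed marginals $\Aa_{ijk},\Bb_{ijk},\Cc_{ijk}$ (for the other components this is automatic). This maximum-entropy property is what allows us to later identify the marginal entropies $\ent(\Aa_{ijk}),\ent(\Bb_{ijk}),\ent(\Cc_{ijk})$ with the contributions of the $x$-, $y$-, and $z$-sides in the hashing step.

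Next I would apply an asymmetric Salem--Spencer / random-hashing zeroing-out, performed jointly over all $(ijk)\in \bar S_8$, to the $x$-, $y$-, and $z$-variables of the big $\Ccc$-tensor. The aggregate marginal entropies on each side are
\[
H_x=\!\!\sum_{(ijk)\in \bar S_8}\!\alpha_{ijk}\ent(\Aa_{ijk}),\quad H_y=\!\!\sum_{(ijk)\in \bar S_8}\!\alpha_{ijk}\ent(\Bb_{ijk}),\quad H_z=\!\!\sum_{(ijk)\in \bar S_8}\!\alpha_{ijk}\ent(\Cc_{ijk}),
\]
with $H_y=H_z$ by the symmetry \eqref{eq:C2} together with the way the $g_{ijk\ell}$'s were assigned in Section~\ref{sub3}. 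Condition \eqref{eq:D3} reads exactly $H_x\ge H_y$, so the bottleneck on the number of surviving terms is $H_y$. The hashing then retains, after an averaging argument, at least $2^{(H_y-o(1))N}=\prod_{(ijk)\in \bar S_8}2^{\alpha_{ijk}\ent(\Bb_{ijk})(1-o(1))N}$ typical joint sequences, each giving rise to an independent direct summand isomorphic to \eqref{eq:tensor2} after the subcomponent collisions on the $x$-, $y$- and $z$-sides have been killed by the zeroing-out.

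The main obstacle is that the hashing must be executed \emph{globally} across all $(ijk)\in \bar S_8$ rather than component by component. A componentwise argument would require $\ent(\Aa_{ijk})\ge\ent(\Bb_{ijk})$ for every $(ijk)$ individually, which is much stronger than the aggregated inequality \eqref{eq:D3} we actually have, and would destroy the flexibility we need for the optimization. The technical work is therefore to lift the standard Salem--Spencer construction to the joint index space $\prod_{(ijk)\in\bar S_8}[\alpha_{ijk}N]$ and verify that only the aggregated entropy inequality \eqref{eq:D3} enters the count of surviving summands (with only a $1-o(1)$ loss). This is precisely the content of Theorem 4.1 of \cite{LeGall+SODA18}, whose proof I would follow essentially verbatim, keeping in mind that our notational conventions differ only superficially (closed-form expressions for the marginals, and the direction of the entropy inequality).
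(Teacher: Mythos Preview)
Your proposal is correct and matches the paper's treatment: the paper does not give an independent proof of this theorem but simply invokes Theorem~4.1 of \cite{LeGall+SODA18}, and your sketch accurately outlines the global laser-method argument behind that result (typical sequences via multinomial estimates, the role of \eqref{eq:D4} as a maximum-entropy condition, and the joint Salem--Spencer/hashing step in which only the aggregated inequality \eqref{eq:D3} is needed rather than componentwise inequalities). One minor remark: the equality $H_y=H_z$ relies on \eqref{eq:D1} together with \eqref{eq:C2} (pairing $(ijk)$ with $(ikj)$), not on \eqref{eq:C2} alone, but you essentially say this.
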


%==============
\subsection{Analysis of the second extraction: the components from $\boldsymbol{S_8\setminus \bar S_8}$}\label{sub5}
%==============
In this subsection we explain how to analyze each component $T_{ijk}$ for $(ijk)\in \bar S_8$.

\paragraph{Definitions of the relevant quantities.}
For each $(ijk)\in \{(008),(017),(026),(035),(044)\}$, observe that the distributions $\Bb_{ijk}$ and $\Cc_{ijk}$ defined in Section \ref{sub3} are identical up to a permutation of the set $\set{0}{4}$. In particular we have $\ent(\Bb_{ijk})=\ent(\Cc_{ijk})$. We denote this quantity by $R_{ijk}$. Here are the concrete expressions:
\begin{align*}
R_{008}&=\ent\left(\left(1,0,0,0,0\right)\right)=0,\\
R_{017}&=\ent\left(\left(1/2,1/2,0,0,0\right)\right)=1,\\
R_{026}&=\ent\left(\left(\frac{g_{0261}}{2},g_{0262},\frac{g_{0261}}{2},0,0\right)\right),\\
R_{035}&=\ent\left(\left(\frac{g_{0351}}{2},\frac{g_{0352}}{2},\frac{g_{0352}}{2},\frac{g_{0351}}{2},0\right)\right),\\
R_{044}&=\ent\left(\left(\frac{g_{0441}}{2},\frac{g_{0442}}{2},g_{0443},\frac{g_{0442}}{2},\frac{g_{0441}}{2}\right)\right).
\end{align*}%\vspace{-10mm}
We also define the following quantities:
%\begin{align*}
%W_{008}&=1\\
%W_{017}&=\Tt_{013}\\
%W_{026}&=\Tt_{022}^{\otimes g_{0261}}\otimes \Tt_{013}^{2g_{0262}},\\
%W_{035}&=\Tt_{031}^{\otimes g_{0351}}\otimes \Tt_{013}^{g_{0352}}\otimes \Tt_{022}^{g_{0352}},\\
%W_{044}&=\Tt_{013}^{\otimes g_{0442}}\otimes \Tt_{031}^{\otimes g_{0442}}\otimes \Tt_{022}^{\otimes 2g_{0443}}.
%\end{align*}
\begin{align*}
W_{008}&=1,\\
W_{017}&=2q,\\
W_{026}&=(2q)^{2g_{0262}}(q^2+2)^{g_{0261}},\\
W_{035}&=(2q)(q^2+2)^{g_{0352}},\\
W_{044}&=(2q)^{2g_{0442}} (q^2+2)^{2g_{0443}}.
\end{align*}

By permuting the indices, we extend these definitions of $R_{ijk}$ and $W_{ijk}$ to all $(ijk)\in S_8\setminus \bar S_8$.

\paragraph{Analysis of the value.}
The following theorem shows how to analyze all these components.\footnote{Theorem \ref{th:3} is related to Claim 7 in Section 5 of the full version of \cite{WilliamsSTOC12}. Claim 7 in the full version of \cite{WilliamsSTOC12}, however, does not introduce any parameters to analyze these tensors since these tensors are isomorphic to matrix multiplication tensors and can then be written in a closed form. In our work, however, we need to introduce parameters and reanalyze explicitly those tensors in terms of these parameters in order to calculate the combination loss in Sections \ref{sec:loss} and \ref{sec:rec}.}
\begin{theorem}\label{th:3}
For any $(ijk)\in S_8\setminus \bar S_8$ and any large enough $N$, the tensor 
$
T_{ijk}^{\otimes \alpha_{ijk}N}
$
can be converted by a combinatorial restriction into a tensor isomorphic to 
\[
\begin{cases}
\braket{1,1,2^{\alpha_{ijk}R_{ijk}(1-o(1))N}\cdot W_{ijk}^N}& \textrm{ if } i = 0,\\
\braket{2^{\alpha_{ijk}R_{ijk}(1-o(1))N}\cdot W_{ijk}^N,1,1}& \textrm{ if } j = 0,\\
\braket{1,2^{\alpha_{ijk}R_{ijk}(1-o(1))N}\cdot W_{ijk}^N,1}& \textrm{ if } k = 0.\\
\end{cases}
\]
\end{theorem}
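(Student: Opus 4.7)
The plan is to exploit the fact that when $i=0$ (with $j=0$ and $k=0$ reducing to this case by the obvious three-fold symmetry of $\tcw^{\otimes 4}$ under permuting the variable types), every subcomponent $\Tt_{0bc}\otimes\Tt_{0b'c'}$ appearing in the decomposition of $T_{0jk}$ is, by \eqref{eq:term}, a tensor product of two $\langle 1,1,\cdot\rangle$-type matrix multiplications, and hence itself isomorphic to $\langle 1,1,p_s\rangle$ for some $p_s$ depending on the subcomponent. Moreover, all such subcomponents share the same $x$-variable (namely $x_0\otimes x_0$, since their $x$-indices satisfy $a=a'=0$) but live in pairwise disjoint $y$- and $z$-subspaces, obtained by tracking the Coppersmith-Winograd labels through the tensor products in \eqref{eq:decompT2}. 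Thus $T_{0jk}$ is itself a $\Ccc$-tensor with support $\langle 1,1,\ell_{0jk}\rangle$, where $\ell_{0jk}$ is the number of subcomponents listed in Table \ref{table:comp}.

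Applying Proposition \ref{prop:Ct}, the power $T_{0jk}^{\otimes \alpha_{0jk} N}$ is then a $\Ccc$-tensor with support $\langle 1,1,\ell_{0jk}^{\alpha_{0jk} N}\rangle$, its components indexed by length-$\alpha_{0jk} N$ sequences of subcomponent choices $(s_1,\ldots,s_{\alpha_{0jk} N})$, each component isomorphic to $\langle 1,1,\prod_t p_{s_t}\rangle$. I would now perform a laser-method-style combinatorial restriction: zero out all $y$- and $z$-variables corresponding to sequences whose empirical type distribution differs from the target distribution $g_{0jk}$, i.e., retain only those sequences in which each subcomponent $s$ appears exactly $g_{0jk,s}\alpha_{0jk}N$ times. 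The number of retained sequences is the multinomial coefficient $\binom{\alpha_{0jk}N}{\{g_{0jk,s}\alpha_{0jk}N\}_s}$, which by Stirling's approximation is $2^{\alpha_{0jk}\ent(g_{0jk})N(1-o(1))}$. A short direct calculation (leveraging that $\Aa_{0jk}$ is a Dirac mass at $0$, so $g_{0jk}$ is fully determined by $\Bb_{0jk}$) shows $\ent(g_{0jk})=R_{0jk}$ in each of the five cases $(0jk)\in\{(008),(017),(026),(035),(044)\}$.

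By construction of $W_{0jk}$ (it is precisely the $g_{0jk}$-weighted geometric mean of the subcomponent sizes $p_s$), each retained sequence yields a component isomorphic to the matrix multiplication $\langle 1,1,W_{0jk}^{\alpha_{0jk}N}\rangle$. Since the retained components share a single $x$-variable ($x_0^{\otimes 2\alpha_{0jk} N}$) but occupy pairwise disjoint $y$- and $z$-subspaces, the resulting restriction of $T_{0jk}^{\otimes \alpha_{0jk} N}$ is a $\Ccc$-tensor with support $\langle 1,1,2^{\alpha_{0jk}R_{0jk}N(1-o(1))}\rangle$ whose every component is $\langle 1,1,W_{0jk}^{\alpha_{0jk}N}\rangle$. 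Applying Proposition \ref{prop:Ct-mm} collapses this into the claimed matrix multiplication tensor.

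The main obstacle is the bookkeeping for disjointness: one must carefully verify that distinct subcomponents within a single copy of $T_{0jk}$ use disjoint $y$-variable sets and disjoint $z$-variable sets, so that across the tensor power $T_{0jk}^{\otimes \alpha_{0jk}N}$ different sequences automatically land in disjoint $y$- and $z$-subspaces. This justifies the combinatorial restriction without any Salem-Spencer-type hashing (which would otherwise be needed, as in the standard laser method for square matrix multiplication). This disjointness follows from the explicit Coppersmith-Winograd variable labels in Section \ref{sub:CW} together with the decomposition \eqref{eq:decompT2}, but checking it systematically across all five cases is the most delicate step.
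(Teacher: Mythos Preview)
Your proposal is correct and takes essentially the same approach as the paper's proof: both exploit that for $i=0$ there is a unique $x$-variable, zero out $y$- and $z$-variables whose empirical type does not match the target distribution, count surviving types via a multinomial/Stirling estimate, and then invoke Proposition~\ref{prop:Ct-mm}. The paper phrases the restriction directly in terms of the marginal $\Bb_{ijk}$ on the first $y$-index $r_u$ (which, since $j$ is fixed, determines the subcomponent), whereas you phrase it in terms of the subcomponent distribution $g_{0jk}$; your check $\ent(g_{0jk})=R_{0jk}$ is exactly the identification between these two viewpoints.

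Two small remarks. First, your invocation of Proposition~\ref{prop:Ct} is slightly loose: as stated it requires all components to be isomorphic, which the subcomponents of $T_{0jk}$ are not before restriction; the statement you actually need (that a tensor power of a $\Ccc$-tensor with support $\langle 1,1,\ell\rangle$ is a $\Ccc$-tensor with support $\langle 1,1,\ell^M\rangle$, components indexed by sequences) is immediate but not literally Proposition~\ref{prop:Ct}. Second, the disjointness you flag as the ``most delicate step'' is in fact automatic and requires no case-by-case check: in the decomposition $\tcw^{\otimes 2}=\sum_{(abc)\in S_4}\Tt_{abc}$ the $y$-variables of $\Tt_{abc}$ are exactly those whose pair of Coppersmith--Winograd types sums to $b$, so distinct values of $(b,b')$ automatically give disjoint $y$-variable sets (and likewise for $z$). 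The paper sidesteps this entirely by working directly with the index strings $[(r_1,j-r_1),\ldots,(r_M,j-r_M)]$, which makes the disjointness manifest.
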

\begin{proof}
We first consider the case $(ijk)\in \{(008),(017),(026),(035),(044)\}$, and the tensor $T_{ijk}^{\otimes M}$ for some integer $M$ large enough so that $\Bb_{ijk}(c)$ and $\Cc_{ijk}(c)$ are multiples of $1/M$ for all $c\in\set{1}{4}$ (this is possible since we are assuming that all $g_{ijk\ell}$'s are rational numbers). 

Each variable of $T_{ijk}^{\otimes M}$ can be indexed by a string in $(\set{0}{4}\times \set{0}{4})^M$ by concatenating the indices of the variables. More precisely, each $y$-variable is indexed by a string $[(r_1,j-r_1),(r_2,j-r_2),\ldots,(r_{M},j-r_{M})]$; each $z$-variable is indexed by a string $[(s_1,k-s_1),(s_2,k-s_2),$ $\ldots,(s_{M},k-s_{M})]$; each $x$-variable is indexed by the string $[(0,0),(0,0),\ldots,(0,0)]$. For instance, for 
\[
T_{017}^{\otimes M}=(\Tt_{004}\otimes \Tt_{013}+\Tt_{013}\otimes \Tt_{004})^{\otimes M}=(\Tt_{004}\otimes \Tt_{013})^{\otimes M}+\cdots+(\Tt_{013}\otimes \Tt_{004})^{\otimes M},
\]
we have $(r_u,j-r_u)\in\{(0,1),(1,0)\}$ and $(s_u,k-r_u)\in\{(3,4),(4,3)\}$ for each $u\in\set{1}{M}$. For the first term in the sum, the $y$-variables are indexed by $[(0,1),(0,1),\dots,(0,1)]$ and the $z$-variables by $[(4,3),(4,3),\dots,(4,3)]$. For the last term in the sum, the $y$-variables are indexed by $[(1,0),(1,0),\dots,(1,0)]$ and the $z$-variables by $[(3,4),(3,4),\dots,(3,4)]$.

We set to zero all $y$-variables except those such that the distribution of the $r_u's$ among the $M$ coordinates of the index matches the distribution $\Bb_{ijk}$. Similarly set to zero all $z$-variables except those such that the distribution of the $s_u's$ among the $M$ coordinates of the index matches the distribution $\Cc_{ijk}$. Since the label for the $x$-variable is unique, and $\Bb_{ijk}=\Cc_{ijk}$, we obtain a $\Ccc$-tensor with support $\braket{1,1,\Nn_{ijk}}$, where
\begin{align*}
\Nn_{ijk}&={M \choose \Bb_{ijk}(0)M,\Bb_{ijk}(1)M,\Bb_{ijk}(2)M,\Bb_{ijk}(3)M,\Bb_{ijk}(4)M}\\
&=\Theta\left(\frac{2^{\ent(\Bb_{ijk})M}}{M^2}\right),\\
&=\Theta\left(\frac{2^{R_{ijk}M}}{M^2}\right),
\end{align*}
where the approximation of the multinomial coefficient is done using Stirling's formula. Since each component of the $\Ccc$-tensor is isomorphic to $\braket{1,1,W_{ijk}^M}$, by Proposition \ref{prop:Ct-mm} the whole tensor is isomorphic to $\braket{1,1,\Nn_{ijk}W_{ijk}^M}$.

The results for the cases $j=0$ and $k=0$ follow by permuting the indices.
\end{proof}

%========================
\subsection{The third extraction}\label{sub6}
%========================
It remains to further analyze the tensor \eqref{eq:tensor2}. The terms $\Tt_{abc}$ for $(abc)\in S_4\setminus \{(112),(121),(211)\}$ correspond to matrix multiplications tensors, as shown in  \eqref{eq:term}. We will use the following result from \cite{LeGall+SODA18} to analyze $\Tt_{112}$, $\Tt_{121}$ and $\Tt_{211}$, where for conciseness we write
\[
\myb=\log\left((2b)^b(1-b)^{(1-b)}\right)
\:\:\textrm { and }\:\:
\lambda_{\tilde b}=\log\left( (2\tilde b)^{\tilde b}(1-\tilde b)^{(1-\tilde b)}\right).
\]

\begin{theorem}[Theorem 5.1 in \cite{LeGallFOCS12}]\label{th:4}
For any $a_{211},a_{112}\in\Rat[0,1]$ and any parameters $b,\tilde b\in[0,1]$ such that the 
inequality 
\[
a_{112}\lambda_{b}\le a_{211} \lambda_{\tilde b}
\]
holds, and any large enough~$N$, the tensor
\[
\Tt_{112}^{\otimes a_{112}N}\otimes \Tt_{121}^{\otimes a_{112}N}\otimes \Tt_{211}^{\otimes a_{211}N}
\]
can be converted by a combinatorial restriction into a direct sum of 
\[
2^{
\left(2a_{112}+(1-\lambda_{\tilde b})a_{211}\right)(1-o(1))N
}
\]
terms, each isomorphic to the tensor 
\[
\left\langle
q^{(a_{112}+a_{211}\tilde b)N},q^{(a_{112}+a_{211}\tilde b)N},q^{(2a_{112}b+a_{211}(1-\tilde b))N}
\right\rangle.
\]
\end{theorem}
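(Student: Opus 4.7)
The plan is to follow the laser-method analysis of the three ``non-matrix-multiplication'' components $\Tt_{112}$, $\Tt_{121}$, $\Tt_{211}$ of $\tcw^{\otimes 2}$ developed in \cite{LeGallFOCS12}. The starting point is that each of these components, via \eqref{eq:decompT2}, is a sum of four subcomponents of the form $\tcw^{[\cdot\cdot\cdot]}\otimes\tcw^{[\cdot\cdot\cdot]}$, and each such subcomponent is itself a (small) matrix multiplication tensor by Proposition \ref{prop:Ct}. I would index each copy of $\Tt_{112}$, $\Tt_{121}$, $\Tt_{211}$ by a ``type'' recording which of its four subcomponents is selected, and use the parameters $b$ and $\tilde b$ to prescribe empirical type-frequencies: $b$ governs, for $\Tt_{112}$ and $\Tt_{121}$, the fraction of copies whose subcomponent is of $\tcw^{[011]}\otimes\tcw^{[101]}$-type rather than $\tcw^{[002]}\otimes\tcw^{[110]}$-type, and $\tilde b$ plays the analogous role for $\Tt_{211}$.

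Next I would zero out, in the full expansion of $\Tt_{112}^{\otimes a_{112}N}\otimes \Tt_{121}^{\otimes a_{112}N}\otimes \Tt_{211}^{\otimes a_{211}N}$, every variable whose multi-index does not match the prescribed type frequencies. The surviving tensor is a $\Ccc$-tensor whose components are tensor products of the selected subcomponent matrix multiplications; by tensor multiplicativity (Proposition \ref{prop:Ct}), each component is isomorphic to $\langle q^{(a_{112}+a_{211}\tilde b)N},q^{(a_{112}+a_{211}\tilde b)N},q^{(2a_{112}b+a_{211}(1-\tilde b))N}\rangle$, with the dimension exponents obtained by summing the $q^{0}/q^{1}$ contributions copy by copy. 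The number of type-profiles consistent with the chosen frequencies is a multinomial coefficient that by Stirling's formula evaluates to $2^{(2a_{112}+(1-\lambda_{\tilde b})a_{211})(1-o(1))N}$, and this is the number of components of the $\Ccc$-tensor's support.

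The main obstacle, and the reason for the hypothesis $a_{112}\lambda_b\le a_{211}\lambda_{\tilde b}$, is that the $\Ccc$-tensor's support is not yet a direct sum: distinct triples of indices can collide on a shared $x$-, $y$- or $z$-variable (these are the ``holes'' from the $\tcw^{[011]}$-, $\tcw^{[101]}$-, $\tcw^{[110]}$-type factors). The standard fix is a Salem--Spencer-style hashing argument, where additional variables are zeroed out by imposing random linear constraints modulo a prime chosen in $[2N,4N]$; one shows that, with high probability, at most one triple survives in each residue class, converting the support into a genuine direct sum at the cost of only a $1-o(1)$ factor in the exponent. The inequality $a_{112}\lambda_b\le a_{211}\lambda_{\tilde b}$ is exactly the balance condition that makes this hashing feasible: it ensures that the ``hole budget'' generated by the subcomponents in the $\Tt_{112}$ and $\Tt_{121}$ copies does not exceed what the subcomponents in the $\Tt_{211}$ copies can absorb. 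Verifying this balance and keeping track of the polynomial losses absorbed into the $1-o(1)$ factor is the most delicate bookkeeping step; the rest of the argument is essentially routine counting.
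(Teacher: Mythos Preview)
The paper does not prove this theorem at all: it is quoted verbatim as Theorem~5.1 of \cite{LeGallFOCS12} and used as a black box in Section~\ref{sub6}, with no argument given. So there is no ``paper's own proof'' to compare against.

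That said, your sketch is a faithful outline of the laser-method argument that the cited reference carries out (which in turn extends the original Coppersmith--Winograd analysis of $\Tt_{112}$). The decomposition into the four subcomponents of each $\Tt_{ijk}$, the choice of type frequencies governed by $b$ and $\tilde b$, the zeroing-out to a $\Ccc$-tensor whose components are the asserted matrix products, the multinomial count giving the $2^{(2a_{112}+(1-\lambda_{\tilde b})a_{211})(1-o(1))N}$ factor, and the Salem--Spencer/hashing step to turn the support into a genuine direct sum---all of this matches the standard argument. One small imprecision: your description of what $b$ controls (``the fraction of copies whose subcomponent is of $\tcw^{[011]}\otimes\tcw^{[101]}$-type rather than $\tcw^{[002]}\otimes\tcw^{[110]}$-type'') has the roles reversed relative to the convention that makes $\lambda_b=\log((2b)^b(1-b)^{1-b})$ come out right; in the usual parametrization $b$ is the weight on the two subcomponents containing a $\tcw^{[110]}$ factor (the ones that contribute only to the $z$-dimension), which is why the $z$-exponent picks up $2a_{112}b$ while the $x$- and $y$-exponents pick up $a_{112}$ each. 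This does not affect the structure of the argument, only the labeling.
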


%========================
\subsection{Upper bound on $\boldsymbol{\omega(\kappa)}$ and optimization}\label{sub7}
%========================
Let us define the quantities $\Gamma,\Delta_x$ and $\Delta_y$ as follows:
\begin{align*}
\Gamma=&
\sum_{(ijk)\in \bar S_8} \left(\ent(\Bb_{ijk})+(2 \varphi_{ijk}(112)+(1-\lambda_{\tilde b})\varphi_{ijk}(211)\right)\alpha_{ijk},\\
\Delta_{x}=&
\sum_{\substack{(ijk)\in S_8\setminus \bar S_8\\j=0}} 
\left(R_{ijk}+\log(W_{ijk})\right)\alpha_{ijk}
+
\!
\sum_{(ijk)\in \bar S_8} 
\Big((\varphi_{ijk}(103)+\varphi_{ijk}(301))\log(2q)+ \\[-5pt]
&\hspace{30mm}
+\varphi_{ijk}(202)\log(q^2+2)+
\left(\varphi_{ijk}(112)+
\varphi_{ijk}(211)\tilde b\right)\log q
\Big)\alpha_{ijk},\\
\Delta_{z}=&
\sum_{\substack{(ijk)\in S_8\setminus \bar S_8\\i=0}} 
\left(R_{ijk}+\log(W_{ijk})\right)\alpha_{ijk}
+
\!
\sum_{(ijk)\in \bar S_8} 
\Big((\varphi_{ijk}(013)+\varphi_{ijk}(031))\log(2q)+ \\[-5pt]
&\hspace{30mm}
+\varphi_{ijk}(022)\log(q^2+2)+
\left(2\varphi_{ijk}(112)b+
\varphi_{ijk}(211)(1-\tilde b)\right)\log q
\Big)\alpha_{ijk}.
\end{align*}

In order to guarantee that the condition in the statement of Theorem \ref{th:4} is satisfied, we also introduce another condition:
\begin{equation}\tag{E1}\label{eq:E1}
\sum_{(ijk)\in \bar S_8} \varphi_{ijk}(112)\alpha_{ijk}\lambda_{b}
\le 
\sum_{(ijk)\in \bar S_8} \varphi_{ijk}(211)\alpha_{ijk}\lambda_{\tilde b}.
\end{equation}
Finally, for any fixed $\kappa$, we introduce the condition 
\begin{equation}\tag{E2}\label{eq:E2}
\Delta_z\ge \kappa \Delta_x.
\end{equation}

%We impose the constraint
%\begin{equation}\tag{E1}\label{eq:E1}
%%\log{\left(2^b b^b (1-b)^{1-b}\right)}\cdot
%\sum_{(i,j,k)\in S^\ast_8} f_{ijk}(211)\alpha_{ijk}
%\ge 
%%\log{\left(2^b b^b (1-b)^{1-b}\right)}\cdot
%\sum_{(i,j,k)\in S^\ast_8} f_{ijk}(112)\alpha_{ijk}
%\end{equation}
%in order to satisfy the condition required in Theorem \ref{th:3}.

Combining Theorems \ref{th:1}, \ref{th:2}, \ref{th:3} and $\ref{th:4}$, we obtain the following theorem. 
\begin{theorem}\label{th:opt1}
Consider any $\kappa\ge 0$.
For any set of parameters $\alpha_{ijk},g_{ijk\ell}\in\Rat[0,1]$ and any $b,\tilde b\in[0,1]$ satisfying Conditions \eqref{eq:C1},  \eqref{eq:C2}, \eqref{eq:C3}, \eqref{eq:C4}, \eqref{eq:D1}, \eqref{eq:D2}, \eqref{eq:D3}, \eqref{eq:D4}, \eqref{eq:E1}, \eqref{eq:E2}, 
the upper bound
\[
\omega(\kappa)\le \frac{4\log{(q+2)}-\Gamma-\ent(\Bb)}{\Delta_x}.
\]
holds. 
\end{theorem}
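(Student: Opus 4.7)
The goal is to chain Theorems~\ref{th:1}--\ref{th:4} so that $\tcw^{\otimes 4N}$ is transformed, by a combinatorial restriction, into a direct sum of many copies of a single matrix-multiplication tensor $\braket{m,m,m^s}$ with $s\ge\kappa$, and then invoke Proposition~\ref{th:value} together with $\underline{R}(\tcw^{\otimes 4N})\le (q+2)^{4N}$ (from submultiplicativity of the border rank and $\underline{R}(\tcw)\le q+2$).

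First I apply Theorem~\ref{th:1} (using Conditions~\eqref{eq:C1}--\eqref{eq:C4}) to obtain $2^{\ent(\Bb)(1-o(1))N}$ copies of $\bigotimes_{(ijk)\in S_8}T_{ijk}^{\otimes\alpha_{ijk}N}$, split the product into its $\bar S_8$ and $S_8\setminus\bar S_8$ factors, apply Theorem~\ref{th:2} (using Conditions~\eqref{eq:D1}--\eqref{eq:D4}) to the former, and apply Theorem~\ref{th:3} to each factor of the latter. After these steps, each factor with $(ijk)\in S_8\setminus\bar S_8$ is a matrix multiplication stretched along a single axis (the $z$-axis if $i=0$, the $x$-axis if $j=0$, the $y$-axis if $k=0$), and the remaining factors are tensors $\Tt_{abc}^{\otimes\varphi_{ijk}(abc)\alpha_{ijk}N}$ for $(ijk)\in\bar S_8$ and $(abc)\in S_4$. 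Those with $(abc)\in S_4\setminus\{(112),(121),(211)\}$ are matrix multiplications by~\eqref{eq:term}. Regrouping the rest yields $\Tt_{112}^{\otimes a_{112}N}\otimes\Tt_{121}^{\otimes a_{121}N}\otimes\Tt_{211}^{\otimes a_{211}N}$ with $a_{abc}=\sum_{(ijk)\in\bar S_8}\varphi_{ijk}(abc)\alpha_{ijk}$, where Condition~\eqref{eq:C2} combined with the permutation-based definition of the $\varphi_{ijk}$ forces $a_{121}=a_{112}$, and Condition~\eqref{eq:E1} supplies the hypothesis of Theorem~\ref{th:4}, which I then apply to obtain one more matrix-multiplication factor together with an additional $2^{(2a_{112}+(1-\lambda_{\tilde b})a_{211})(1-o(1))N}$ copies.

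At this stage $\tcw^{\otimes 4N}$ has been converted into a direct sum of $r=2^{(\ent(\Bb)+\Gamma)(1-o(1))N}$ copies of a matrix-multiplication tensor $\braket{X,Y,Z}$ in which $\log X/N\to\Delta_x$ and $\log Z/N\to\Delta_z$, and in which $\log Y/N$ converges to a quantity $\Delta_y$ defined exactly like $\Delta_x$ but with $j=0$ replaced by $k=0$ and with the indices $(103),(301),(202),(112)$ replaced by $(130),(310),(220),(121)$. The key structural identity is $\Delta_x=\Delta_y$, and hence $X=Y$. This follows from the involution $\sigma(ijk)=(ikj)$: it preserves $S_8$, $\bar S_8$, and $S_8\setminus\bar S_8$, and satisfies $\alpha_{\sigma(ijk)}=\alpha_{ijk}$ by~\eqref{eq:C2}, $\varphi_{\sigma(ijk)}(abc)=\varphi_{ijk}(acb)$ by the construction of Section~\ref{sub4}, and $R_{\sigma(ijk)}=R_{ijk}$, $W_{\sigma(ijk)}=W_{ijk}$ by the construction of Section~\ref{sub5}; reindexing each of the three sources of $x$-contribution via $\sigma$ reproduces the $y$-contribution term by term (the Theorem~\ref{th:4} part is already $x/y$-symmetric since its first two dimensions are equal).

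Combined with Condition~\eqref{eq:E2}, which reads $\Delta_z\ge\kappa\Delta_x$, i.e.\ $Z\ge X^\kappa$, this gives $\braket{X,Y,Z}\cong\braket{m,m,m^s}$ with $m=X$ and $s=\Delta_z/\Delta_x\ge\kappa$. Proposition~\ref{th:value} then yields $r\cdot X^{\omega(\kappa)}\le (q+2)^{4N}$; taking logarithms, dividing by $\log X\sim \Delta_x N$, and letting $N\to\infty$ produces the stated bound. The main non-trivial step is the $\Delta_x=\Delta_y$ verification: it is purely combinatorial but requires carefully matching up the three distinct sources of contribution to each of the $x$- and $y$-axes (from Theorems~\ref{th:2}, \ref{th:3} and~\ref{th:4}) and checking their invariance under the $\sigma$ involution; once this is granted, the rest of the argument is essentially bookkeeping.
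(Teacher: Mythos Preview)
Your proposal is correct and follows essentially the same approach as the paper: chain Theorems~\ref{th:1}--\ref{th:4} with the identities~\eqref{eq:term}, obtain $2^{(\ent(\Bb)+\Gamma)(1-o(1))N}$ copies of $\braket{2^{\Delta_x N},2^{\Delta_y N},2^{\Delta_z N}}$, use the symmetry to get $\Delta_x=\Delta_y$, invoke Proposition~\ref{th:value} via Condition~\eqref{eq:E2}, and pass to the limit. Your explicit verification of $a_{121}=a_{112}$ and of $\Delta_x=\Delta_y$ via the involution $\sigma(ijk)=(ikj)$ merely unpacks what the paper compresses into the one-line remark ``$\Delta_y=\Delta_x$ due to the symmetry Conditions~\eqref{eq:C2} and~\eqref{eq:D1}''; note that your claim $\varphi_{\sigma(ijk)}(abc)=\varphi_{ijk}(acb)$ is precisely where Condition~\eqref{eq:D1} enters.
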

\begin{proof}
Combining Theorems \ref{th:1}, \ref{th:2}, \ref{th:3} and $\ref{th:4}$, and using the identities \eqref{eq:term} to analyze $\Tt_{013}$, $\Tt_{031}$, $\Tt_{103}$, $\Tt_{130}$, $\Tt_{301}$, $\Tt_{310}$, $\Tt_{022}$, $\Tt_{202}$, $\Tt_{220}$, we get that for any large enough $N$, the tensor $\tcw^{\otimes 4N}$ can be converted by a combinatorial restriction into a direct sum of 
\[
2^{(\Gamma+\ent(\Bb))(N-o(N))}
\]
terms, each isomorphic to a tensor 
\[
\left\langle
2^{\Delta_x(N-o(N))},2^{\Delta_y(N-o(N))},2^{\Delta_z(N-o(N))}
\right\rangle,
\]
where $\Delta_y=\Delta_x$ due to the symmetry Conditions \eqref{eq:C2} and \eqref{eq:D1}.

Using the upper bound $\underline R(\tcw^{\otimes 4})\le \underline R(\tcw)^4\le (q+2)^{4}$ from Section \ref{sub:CW}, applying Proposition~\ref{th:value} (which can be done due to Condition \eqref{eq:E2}), we obtain the inequality.
\[
\left(\Gamma+\ent(\Bb)+\omega(\kappa)\Delta_x\right)(N-o(N))\le 4\log(q+2) N,
\]
Dividing each side of above inequality by $N$ and then taking the limit when $N$ goes to infinity we obtain
\[
\Gamma+\ent(\Bb)+\omega(\kappa)\Delta_x\le 4\log(q+2),
\] 
which gives the claimed inequality.
\end{proof}

We have implemented the optimization problem corresponding to Theorem \ref{th:opt1} in Maple. Concretely, for a given $\kappa\ge 0$, we implement a search over the parameters $\alpha_{ijk},g_{ijk\ell}, b, \tilde b$ satisfying all the conditions of Theorem \ref{th:opt1} in order to find the smallest possible value of $\rho$ so that the inequality
\begin{equation}\label{eq:rho}
\Gamma+\ent(\Bb)+\rho\Delta_x\ge 4\log(q+2)
\end{equation}
holds. This value of $\rho$ is necessarily an upper bound on $\omega(\kappa)$. We stress that we do not need to find the minimum value of $\rho$ satisfying \eqref{eq:rho}: \emph{any} $\rho$ satisfying \eqref{eq:rho} gives an upper bound on $\omega(\kappa)$. This makes the task of verifying our numerical results easy: we only need to check that \eqref{eq:rho} and all the constraints of Theorem \ref{th:opt1} are satisfied.

The file of the optimization program can be found at \cite{File}.  For instance, for $\kappa=2$ we obtain the upper bound 
\[
\omega(2)<3.251640,
\]
which exactly matches the bound obtained in \cite{LeGall+SODA18}. 

More precisely, the parameters giving this upper bound are shown Table \ref{table:opt1}. For these parameters, we have
\begin{align*}
\Gamma+\ent(\Bb)+3.251640\cdot \Delta_x&= 11.2294215...,\\
4\log(q+2)&=11.2294197...,
\end{align*}
which implies the bound $\omega<3.251640$. The file available at \cite{File} also includes a program to check that these parameters satisfy all the constraints of Theorem~\ref{th:opt1}, as well as these calculations.
We observe that for these parameters we have 
\begin{align*}
\ent(\Aa)&=2.14399...,\\
\ent(\Bb)&=2.14399...,\\
\sum_{(ijk)\in  \bar S_8}\alpha_{ijk}\ent(\Aa_{ijk})&=0.99086...,\\ 
\sum_{(ijk)\in \bar S_8}\alpha_{ijk}\ent(\Bb_{ijk})&=0.99086...,
\end{align*}
i.e., both Inequalities \eqref{eq:C4} and \eqref{eq:D3} are saturated. In Sections \ref{sec:loss} and \ref{sec:rec} we will show how to relax \eqref{eq:C4} and \eqref{eq:D3}, which will give a better bound on $\omega(\kappa)$.

%\paragraph{Remark.}
%We remark that \cite{LeGall14,WilliamsSTOC12} reported the slightly better upper bound $\omega<2.3729269$ for the analysis of the 
%\cite{LeGall14,LeGall+SODA18,WilliamsSTOC12}
% is $\omega<2.372927$. This better upper bound can be obtained from the above analysis by removing the condition \eqref{eq:C3} and dealing with it appropriately.
%===============
\section{Analysis of the Combination Loss for the Outer Structure}\label{sec:loss}
%===============
In this section we apply the technique from \cite{Duan+23} to relax Condition $\eqref{eq:D3}$ and obtain improved upper bounds on $\omega(\kappa)$.

In Section \ref{sub:loss-param} we first define the parameters introduced in \cite[Section 6]{Duan+23} to analyze the combination loss. In Section \ref{sub:analysis-loss} we relax Condition $\eqref{eq:D3}$ by exploiting the combination loss and give Theorem \ref{th:opt2}, which improves Theorem \ref{th:opt1}. 
%====
\subsection{Parameters for analyzing the combination loss}\label{sub:loss-param}
%====
We define below $\gamma,\alpha_{i\textrm{++}}$ and $\bar \alpha_{i\textrm{++}}$ and $\chi$ as in \cite[Section 6]{Duan+23}.\footnote{The term $\chi$ corresponds to $\log (\alpha_P)$ in \cite{Duan+23} (it is more convenient for us to use the logarithm of $\alpha_P$). Also note that Ref.~\cite{Duan+23} actually uses the notation $\alpha_{\textrm{++}k}$ and $\bar \alpha_{\textrm{++}k}$, instead of $\alpha_{i\textrm{++}}$ and $\bar \alpha_{i\textrm{++}}$. This is because in \cite{Duan+23} the symmetry is between the $x$-variables and the $y$-variables, while in our work we have symmetry between the $y$-variables and the $z$-variables.} To make the definitions easier to understand, we give closed-form formulas and examples for several of them. Through this subsection we assume that Conditions \eqref{eq:C2}, \eqref{eq:D1} and \eqref{eq:D2} holds.

\paragraph{The distribution $\boldsymbol{\gamma}$.}
Define the probability distribution $\gamma\colon\{0,1,\ldots,4\}\times \{0,1,\ldots,4\}\to [0,1]$ as 
\[
\gamma(c,d)=\sum_{\substack{(i,j,k)\in S_8\\i=c+d}} \alpha_{ijk} \cdot \Aa_{ijk}(c)
\]
for all $(c,d)\in \{0,1,\ldots,4\}\times \{0,1,\ldots,4\}$, where $\Aa_{ijk}$ is the distribution defined in Section~\ref{sub3}. This distribution corresponds to the marginal distribution of the $x$-variable (seen as an index in $\{0,1,\ldots,4\}\times \{0,1,\ldots,4\}$) obtained when assigning the distribution $\{\alpha_{ijk}\}$ on the components and then assigning the distribution $g_{ijk}$ on each subcomponents of $T_{ijk}$. For example, we have 
\[
\gamma(3,3)=\alpha_{611}\Aa_{611}(3)+\alpha_{602}\Aa_{602}(3)+\alpha_{620}\Aa_{620}(3),
\]
since an index $(3,3)$ for the $x$-variable can only come from the components $T_{611}$, $T_{602}$ and $T_{620}$: for $T_{611}$ it comes from the subcomponents $\Tt_{310}\otimes \Tt_{301}$ and $\Tt_{301}\otimes\Tt_{310}$ (which have total weight $\Aa_{611}(3))$, for $T_{602}$ it comes from the subcomponent $T_{301}\otimes T_{301}$ (which has weight $\Aa_{602}(3)$), and for $T_{620}$ it comes from the subcomponent $T_{310}\otimes T_{310}$ (which has weight $\Aa_{620}(3)$).

\paragraph{The terms $\boldsymbol{\alpha_{i\textrm{++}}}$ and $\boldsymbol{\bar \alpha_{i\textrm{++}}}$.}
For each $i\in\set{0}{8}$,
define the quantity 
\[
\alpha_{i\textrm{++}}=
\sum_{\substack{j,k>0\\(ijk)\in S_8}} \alpha_{ijk}.
\]
For any $i\in\set{0}{8}$ such that $\alpha_{i\textrm{++}}\neq 0$, additionally define the probability distribution $\bar \alpha_{i\textrm{++}}\colon\set{0}{4}\to[0,1]$ as
\[
\bar \alpha_{i\textrm{++}}=
\frac{1}{\alpha_{i\textrm{++}}}
\sum_{\substack{j,k>0\\(ijk)\in S_8}} \alpha_{ijk}\cdot \Aa_{ijk},
\]
which corresponds to the average marginal distribution of the $x$-variable, where the average is taken over all components $T_{ijk}$ (weighted by $\alpha_{ijk}$) with $j,k>0$. Concretely, we get $\alpha_{8\textrm{++}}=\alpha_{7\textrm{++}}=0$, $\alpha_{i\textrm{++}}\neq 0$ for $i\in\set{0}{6}$, and have
\begin{align*}
\bar \alpha_{0\textrm{++}}&=\frac{2\alpha_{017}\Aa_{017}+2\alpha_{026}\Aa_{026}+2\alpha_{035}\Aa_{035}+\alpha_{044}\Aa_{044}}{2\alpha_{017}+2\alpha_{026}+2\alpha_{035}+\alpha_{044}},\\
\bar \alpha_{1\textrm{++}}&=\frac{\alpha_{116}\Aa_{116}+\alpha_{125}\Aa_{125}+\alpha_{134}\Aa_{134}}{\alpha_{116}+\alpha_{125}+\alpha_{134}},\\
\bar \alpha_{2\textrm{++}}&=\frac{2\alpha_{215}\Aa_{215}+2\alpha_{224}\Aa_{224}+\alpha_{233}\Aa_{233}}{2\alpha_{215}+2\alpha_{224}+\alpha_{233}},\\
\bar \alpha_{3\textrm{++}}&=\frac{\alpha_{314}\Aa_{314}+\alpha_{323}\Aa_{323}}{\alpha_{314}+\alpha_{323}},\\
\bar \alpha_{4\textrm{++}}&=\frac{2\alpha_{413}\Aa_{413}+\alpha_{422}\Aa_{422}}{2\alpha_{413}+\alpha_{422}},\\
\bar \alpha_{5\textrm{++}}&=\Aa_{512},\\
\bar \alpha_{6\textrm{++}}&=\Aa_{611}.\\
\end{align*}

\paragraph{The term $\boldsymbol{\chi}$.}
Finally, define the quantity
\[
\chi=\ent(\Aa)-\ent(\gamma)+
\sum_{\substack{(ijk)\in S_8\\j=0\textrm{ or }k=0}} 
\alpha_{ijk}\cdot \ent(\Aa_{ijk})
+
\sum_{i=0}^8 \alpha_{i\textrm{++}}\cdot \ent(\bar \alpha_{i\textrm{++}}).
\]
It can be shown that  $\chi\le 0$: Lemma 6.7 in \cite{Duan+23} shows that the quantity $2^\chi$ represents the probability of a ``block'' to be ``compatible''. We thus have $2^\chi\in [0,1]$, and then $\chi\le 0$.

%\footnote{More precisely, Lemma 6.7 in \cite{Duan+23} shows that the quantity $\alpha_P=2^p$ represents the probability of a block to be compatible. We thus have $2^p\in [0,1]$, and then $p\le 0$.}

%===============
\subsection{Including the combination loss into the analysis of Section \ref{sec:asym}}\label{sub:analysis-loss}
%===============
Consider the new condition
\begin{equation}\tag{C4'}\label{eq:CC4}
\ent(\Aa)-\chi\ge\ent(\Bb),
\end{equation}
which is a relaxation of Condition \eqref{eq:C4}, since $\chi\le 0$.

The analysis of \cite[Section 6]{Duan+23} shows the following theorem.\footnote{See in particular Equations (24) and (25) at the end of Section 6.2 of \cite{Duan+23}. Here is the correspondence between the main terms in \cite{Duan+23} and the terms of our paper:
\begin{align*}
\alpha_{BX}&\hspace{5mm}\longleftrightarrow\hspace{5mm} 2^{\ent(B)},\\
\alpha_{BZ}&\hspace{5mm}\longleftrightarrow\hspace{5mm}2^{\ent(A)},\\
\alpha_{P}&\hspace{5mm}\longleftrightarrow\hspace{5mm}2^{\chi}.
\end{align*}
Note that $\min\left(\alpha_{BX},\frac{\alpha_{BZ}}{\alpha_P}\right)=2^{\ent(\Bb)}$ under Condition \eqref{eq:CC4}.
Also note that due to Condition \eqref{eq:C3} on the parameters, we have $\max_{\alpha'\in D_{\alpha}} \{\alpha'_N\}=\alpha_N$ in Equation (25) of~\cite{Duan+23}.}
\begin{mytheorem}{\ref{th:1}'}[Adapted from Section 6 in \cite{Duan+23}]
For any set of parameters $\alpha_{ijk}\in\Rat[0,1]$ satisfying Conditions \eqref{eq:C1}, \eqref{eq:C2}, \eqref{eq:C3}, \eqref{eq:CC4} and any large enough $N$, the tensor $\tcw^{\otimes 4N}$ can be converted by a combinatorial restriction into a direct sum of 
\[
2^{\ent(\Bb)(1-o(1))N}
\]
terms, each isomorphic to the tensor
\[
\bigotimes_{(ijk)\in S_8}T_{ijk}^{\otimes \alpha_{ijk}N}.
\]
\end{mytheorem}
\addtocounter{theorem}{-1}
Note that the only difference between Theorem \ref{th:1}' and Theorem \ref{th:1} is that Condition  \eqref{eq:C4} is relaxed to Condition \eqref{eq:CC4}.

Replacing Theorem \ref{th:1} by Theorem \ref{th:1}' in the analysis of Section \ref{sub7}, we immediately obtain the following theorem.
\begin{theorem}\label{th:opt2}
Consider any $\kappa\ge 0$. For any set of parameters $\alpha_{ijk},g_{ijk\ell}\in\Rat[0,1]$ and any $b,\tilde b\in[0,1]$ satisfying Conditions \eqref{eq:C1},  \eqref{eq:C2}, \eqref{eq:C3}, \eqref{eq:CC4}, \eqref{eq:D1}, \eqref{eq:D2}, \eqref{eq:D3}, \eqref{eq:D4}, \eqref{eq:E1}, \eqref{eq:E2} the upper bound
\[
\omega(\kappa)\le \frac{4\log{(q+2)}-\Gamma-\ent(\Bb)}{\Delta_x}
\]
holds.
\end{theorem}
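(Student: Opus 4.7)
The plan is to mirror the proof of Theorem~\ref{th:opt1} verbatim, but to replace the invocation of Theorem~\ref{th:1} with the newly-stated Theorem~\ref{th:1}', which is available under the relaxed hypothesis \eqref{eq:CC4} rather than \eqref{eq:C4}. All other machinery (the second and third extractions, plus the $\Ccc$-tensor analysis for the border components) is independent of whether we use \eqref{eq:C4} or \eqref{eq:CC4}, so it can be reused without modification.

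Concretely, I would first apply Theorem~\ref{th:1}' to $\tcw^{\otimes 4N}$: since we assume Conditions \eqref{eq:C1}, \eqref{eq:C2}, \eqref{eq:C3} and \eqref{eq:CC4}, this converts $\tcw^{\otimes 4N}$ by a combinatorial restriction into $2^{\ent(\Bb)(1-o(1))N}$ copies of $\bigotimes_{(ijk)\in S_8} T_{ijk}^{\otimes \alpha_{ijk} N}$. Next I would split the tensor product into its $(ijk)\in\bar S_8$ factor and its $(ijk)\in S_8\setminus \bar S_8$ factors. For the first factor, Theorem~\ref{th:2} (whose hypotheses \eqref{eq:D1}, \eqref{eq:D2}, \eqref{eq:D3}, \eqref{eq:D4} are among our assumptions) gives a further combinatorial restriction into $\prod_{(ijk)\in\bar S_8}2^{\alpha_{ijk}\ent(\Bb_{ijk})(1-o(1))N}$ copies of $\bigotimes_{(ijk)\in \bar S_8}\bigotimes_{(abc)\in S_4}\Tt_{abc}^{\otimes \varphi_{ijk}(abc)\alpha_{ijk}N}$. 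For each factor with $(ijk)\in S_8\setminus\bar S_8$, Theorem~\ref{th:3} produces a matrix-multiplication tensor whose dimensions are determined by $R_{ijk}$ and $W_{ijk}$, contributing to either the $x$- or $z$-dimension depending on which of $i,j,k$ is zero.

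Then I would handle the $\Tt_{abc}$'s for $(abc)\in S_4\setminus\{(112),(121),(211)\}$ using the closed-form matrix-multiplication identifications in \eqref{eq:term}, and handle the remaining tensor powers of $\Tt_{112},\Tt_{121},\Tt_{211}$ via Theorem~\ref{th:4}, whose hypothesis is guaranteed by Condition \eqref{eq:E1}. Collecting exponents, the cumulative number of direct-sum terms is $2^{(\Gamma+\ent(\Bb))(N-o(N))}$ and each summand is isomorphic to a rectangular matrix multiplication tensor $\bigl\langle 2^{\Delta_x(N-o(N))},\,2^{\Delta_y(N-o(N))},\,2^{\Delta_z(N-o(N))}\bigr\rangle$, with $\Delta_y=\Delta_x$ by the symmetry conditions \eqref{eq:C2} and \eqref{eq:D1}. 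Condition \eqref{eq:E2} ensures that $\Delta_z \ge \kappa \Delta_x$, so each summand is of the form $\langle m,m,m^s\rangle$ with $s\ge \kappa$ for $m=2^{\Delta_x(N-o(N))}$.

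Finally I would apply the asymptotic sum inequality (Proposition~\ref{th:value}) together with the submultiplicativity bound $\underline R(\tcw^{\otimes 4N})\le (q+2)^{4N}$ to obtain
\[
2^{(\Gamma+\ent(\Bb))(N-o(N))}\cdot 2^{\omega(\kappa)\Delta_x(N-o(N))} \le (q+2)^{4N}.
\]
Taking logarithms, dividing by $N$, and letting $N\to\infty$ yields $\Gamma+\ent(\Bb)+\omega(\kappa)\Delta_x\le 4\log(q+2)$, which rearranges to the claimed upper bound on $\omega(\kappa)$. The argument is essentially a bookkeeping exercise once Theorem~\ref{th:1}' is in hand; no step should be a serious obstacle because the only substitution compared to Theorem~\ref{th:opt1} is the swap of \eqref{eq:C4} for \eqref{eq:CC4}, which happens entirely inside the proof of Theorem~\ref{th:1}' and leaves the downstream combinatorial-restriction chain intact.
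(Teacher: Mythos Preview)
Your proposal is correct and follows exactly the same approach as the paper: the paper's proof simply states that replacing Theorem~\ref{th:1} by Theorem~\ref{th:1}' in the analysis of Section~\ref{sub7} immediately yields the result, and your write-up spells out precisely that substitution in full detail.
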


The only difference between Theorem \ref{th:opt1} and Theorem \ref{th:opt2} is again that Condition  \eqref{eq:C4} is relaxed to Condition \eqref{eq:CC4}.

We have implemented the optimization problem corresponding to Theorem \ref{th:opt2} in Maple. The search is similar that the search done in Section \ref{sub7}, but imposes Condition \eqref{eq:CC4} instead of \eqref{eq:C4} on the parameters. The file of the optimization program can be found at \cite{File}.  For $\kappa=2$, we obtain the upper bound 
\[
\omega(2)<3.251502,
\]
which improves the upper bound $\omega<3.251640$ from Section \ref{sec:asym} (i.e., the bound from \cite{LeGall+SODA18}). 

The parameters giving this upper bound are shown in Table \ref{table:opt2}. For these parameters, we have
\begin{align*}
\Gamma+\ent(\Bb)+3.251502\cdot \Delta_x&= 11.2294199...,\\
4\log(q+2)&=11.2294197...,
\end{align*}
which implies the bound $\omega(2)<3.251502$. 
The file available at \cite{File} also includes a program to check that these parameters satisfy all the constraints of Theorem~\ref{th:opt2}, as well as these calculations. 
We observe that for these parameters we have 
\begin{align*}
\ent(\Aa)&=2.14147...,\\
\ent(\Bb)&=2.14469...,\\
\sum_{(ijk)\in  \bar S_8}\alpha_{ijk}\ent(\Aa_{ijk})&=0.98800...,\\ 
\sum_{(ijk)\in \bar S_8}\alpha_{ijk}\ent(\Bb_{ijk})&=0.98800...,\\
-\chi&=0.00322..,
\end{align*}
and thus Inequalities \eqref{eq:CC4} and \eqref{eq:D3} are saturated. 
In Section \ref{sec:rec} we will show how to relax \eqref{eq:D3}, which will give a better bound on $\omega(\kappa)$.

\section{Analysis of the Combination Loss for the Components}\label{sec:rec}
%================
In addition to the combination loss considered in Section \ref{sec:loss}, we consider in this section the combination loss in the analysis of the components $T_{ijk}$ as well. This enables us to relax Condition $\eqref{eq:D3}$ and obtain better upper bounds on $\omega(\kappa)$. 

The approach to analyze recursively the combination loss for components of powers of the Coppersmith-Winograd tensor is developed in its full generality in \cite[Section 7]{Duan+23}. In Section~\ref{sub:rec-param} we first define the parameters needed for the analysis. Since we are working only with the fourth power of the Coppersmith-Winograd tensor, we can give closed-form formulas for most of them. In Section \ref{sub:rec-analysis} we apply the analysis of the combination loss from \cite[Section 7]{Duan+23}, combine it with the approach we introduced in Section~\ref{sec:asym} in order to relax Condition~$\eqref{eq:D3}$, and give our new bounds on $\omega(\kappa)$.
%============
\subsection{Parameters for the analysis}\label{sub:rec-param}
%============

For any $(ijk)\in \bar S_8$, we introduce several parameters defined in \cite[Section 7]{Duan+23} to analyze the combination loss of $T_{ijk}$.

%We first define five distributions $\Dd_{1}, \Dd_{2}, \Dd_{3}, \Dd_{4}, \Dd_{5}\colon\left\{0,1,2\right\}^2\to[0,1]$ as follows:
%\begin{align*}
%\Dd^1&=1,\\ %\textrm{ and } \Dd_{400}(a,b)=0 \textrm{ for all }(a,b)\neq(2,2),\\
%\Dd^2(0)&=1,\\ %\textrm{ and } \Dd_{004}(a,b)=0 \textrm{ for all }(a,b)\neq(0,0),\\
%\Dd_{103}(0,1)&=\Dd_{103}(1,0)=1/2,\\ %  \textrm{ and } \Dd_{103}(a,b)=0 \textrm{ for all }(a,b)\notin\{(0,1),(1,0)\},\\
%\Dd_{211}(0,2)&=\Dd_{211}(2,0)=(1-b)/2, \:\:\Dd_{211}(1,1)=b,\\
%\Dd_{202}(0,2)&=\Dd_{202}(2,0)=1/(2+q^2), \:\:\Dd_{202}(1,1)=q^2/(2+q^2).
%\end{align*}
\paragraph{Distributions $\boldsymbol{\Dd_{rst}}$.}
For each $(rst)\in S_4$, define the probability distribution $\Dd_{rst}\colon\left\{0,1,2\right\}^2\to[0,1]$ as follows: 
\begin{equation}\label{eq:Dd}
\begin{split}
\Dd_{400}&=(0,0,1),\\
\Dd_{211}&=\left(\frac{1-\tilde b}{2}, \tilde b,\frac{1-\tilde b}{2}\right),\\
\Dd_{310}=\Dd_{301}&=\left(0,\frac{1}{2},\frac{1}{2}\right),\\
\Dd_{220}=\Dd_{202}&=\left(\frac{1}{2+q^2},\frac{q^2}{2+q^2},\frac{1}{2+q^2}\right),\\
\Dd_{004}=\Dd_{040}=\Dd_{013}=\Dd_{031}=\Dd_{022}&=(1,0,0),\\
\Dd_{130}=\Dd_{103}=\Dd_{112}=\Dd_{121}&=\left(\frac{1}{2},\frac{1}{2},0\right),
\end{split}
\end{equation}
where $\tilde b\in[0,1]$ is the parameter used in Section \ref{sub6}.\footnote{The distribution $\Dd_{rst}$ actually corresponds to the marginal distribution of (the first half of) the $x$-variables when analyzing the tensor $\Tt_{rst}$. For instance, for  
\[
\mathcal{T}_{301}=\tcw^{[101]}\otimes \tcw^{[201]}+\tcw^{[201]}\otimes \tcw^{[101]},
\]
each of the two terms are assigned weight $1/2$ and thus the distribution is $(0,1/2,1,2)$.
For $\Tt_{211}$, $\Tt_{220}$ and $\Tt_{202}$, there is a degree of freedom in the choice of the distribution. The distributions $\Dd_{211}$, $\Dd_{220}$ and $\Dd_{202}$ given in Equation~\eqref{eq:Dd} seem to give the best upper bound on $\omega(\kappa)$.} 

\paragraph{The distributions $\boldsymbol{\gamma_{ijk}}$.} For each $(ijk)\in \bar S_8$, the distribution $\gamma_{ijk}\colon\{0,1,2\}^4\to [0,1]$ is defined as 
\[
\gamma_{ijk}(a,b,c,d)=\sum_{\substack{(rst)\in S_4\\r=a+b\\r'=c+d}} \frac{\varphi_{ijk}(rst)}{2} \cdot \Dd_{rst}(a)\cdot \Dd_{r's't'}(c)
\]
for each $(a,b,c,d)\in \{0,1,2\}^4$, where in the sum we use the notation $r'=i-r$, $s'=j-s$ and $t'=k-t$. 

The distribution $\gamma_{ijk}$ corresponds to the marginal distribution of the $x$-variables (seen as elements of $\{0,1,2\}^4$) when applying the distribution $g_{ijk}$ on the subcomponents of $T_{ijk}$ and then decomposing each term $\Tt_{rst}$ using the distribution $\Dd_{rst}$. For instance, for 
\[
T_{116}=\Tt_{004}\otimes \Tt_{112}+\Tt_{112}\otimes \Tt_{004}+\Tt_{013}\otimes \Tt_{103}+\Tt_{103}\otimes \Tt_{013},
\]
we have $\gamma_{116}(0,1,0,0)=\gamma_{116}(1,0,0,0)=\gamma_{116}(0,0,1,0)=\gamma_{116}(0,0,0,1)=\frac{g_{1161}}{2}\frac{1}{2}+\frac{g_{1162}}{2}\frac{1}{2}=\frac{1}{4}$ and $\gamma_{116}(a,b,c,d)=0$ for all $(a,b,c,d)\notin\{(1,0,0,0),(0,1,0,0),(0,0,1,0),(0,0,0,1)\}$, since the $x$-variables of $\Tt_{004}$ and $\Tt_{013}$ are $00$ with probability 1 and the $x$-variables of $\Tt_{112}$ and $\Tt_{103}$ are $01$ or $10$ (each with probability $1/2$).    

\paragraph{The terms $\boldsymbol{\beta_{ijk}^{r\textrm{++}}}$ and $\boldsymbol{\bar \beta_{ijk}^{r\textrm{++}}}$.}
For each $(ijk)\in \bar S_8$ and each $r\in\set{0}{4}$,
define the quantity 
\[
\beta_{ijk}^{r\textrm{++}}=
\sum_{\substack{s,t>0\\(rst)\in S_4}} \frac{\varphi_{ijk}(rst)}{2}.
\]
If $\beta_{ijk}^{r\textrm{++}}>0$ we also define 
the distribution $\bar \beta^{r\textrm{++}}_{ijk}\colon\set{0}{2}\to[0,1]$ as
\[
\bar \beta_{ijk}^{r\textrm{++}}=
\frac{1}{\beta^{r\textrm{++}}_{ijk}}
\sum_{\substack{s,t>0\\(rst)\in S_4}} \frac{\varphi_{ijk}(rst)}{2}\cdot \Dd_{rst},
\]
which corresponds to the average marginal distribution of the (first half of the) $x$-variables, where the average is taken over the subcomponents $\Tt_{rst}\otimes \Tt_{r's't'}$ of $T_{ijk}$ with $s,t>0$.
Table~\ref{table:beta} gives the values of $\beta_{ijk}^{r\textrm{++}}$ and $\bar \beta^{r\textrm{++}}_{ijk}$ for all $(ijk)\in \bar S_8$ with $j\le k$ and all $r\in\set{0}{4}$ (for the case $k>j$, note that under Condition \eqref{eq:D1} we have $\beta_{ijk}^{r\textrm{++}}=\beta_{ikj}^{r\textrm{++}}$ and $\bar \beta_{ijk}^{r\textrm{++}}=\bar \beta_{ikj}^{r\textrm{++}}$).

\begin{table}[!tbp]
\begin{center}
\begin{tabular}{|c|c|c|c|c|c|c|c|c|c|c|}
\cline{2-11}
%\multicolumn{1}{c|}{}&
%\multicolumn{2}{c|}{$r=0$}&
%\multicolumn{2}{c|}{$r=1$}&
%\multicolumn{2}{c|}{$r=2$}\\
\multicolumn{1}{c|}{}&
$\beta_{ijk}^{0\textrm{++}}$&$\bar\beta_{ijk}^{0\textrm{++}}$
&
$\beta_{ijk}^{1\textrm{++}}$&$\bar\beta_{ijk}^{1\textrm{++}}$
&
$\beta_{ijk}^{2\textrm{++}}$&$\bar\beta_{ijk}^{2\textrm{++}}$
&
$\beta_{ijk}^{3\textrm{++}}$&$\bar\beta_{ijk}^{3\textrm{++}}$
&
$\beta_{ijk}^{4\textrm{++}}$&$\bar\beta_{ijk}^{4\textrm{++}}$
\bigstrut\\
%\cline{2-7}
\hline
116&$\frac{g_{1161}}{2}$&$\Dd_0$&$\frac{g_{1161}}{2}$&$\Dd_1$&0&-&0&-&0&-\bigstrut\\
\hline
125&$\frac{g_{1252}+g_{1253}}{2}$&$\Dd_0$&$\frac{g_{1251}+g_{1252}}{2}$&$\Dd_1$&0&-&0&-&0&-\bigstrut\\
\hline
134&$\frac{g_{1342}+g_{1343}+g_{1344}}{2}$&$\Dd_0$&$\frac{g_{1342} + g_{1344}}{2}$&$\Dd_1$&0&-&0&-&0&-\bigstrut\\
\hline
215&$\frac{g_{2153}}{2}$&$\Dd_0$&$\frac{g_{2152}}{2}$&$\Dd_1$&$\frac{g_{2151}}{2}$&$\Dd_2$&0&-&0&-\bigstrut\\
\hline
224&$\frac{g_{2242} + g_{2243}}{2}$&$\Dd_0$&$\frac{g_{2242}}{2} + g_{2244}$&$\Dd_1$&$\frac{g_{2242}}{2}$&$\Dd_2$&0&-&0&-\bigstrut\\
\hline
233&$g_{2331} + \frac{g_{2333}}{2}$&$\Dd_0$&$g_{2334}$&$\Dd_1$&$\frac{g_{2333}}{2}$&$\Dd_2$&0&-&0&-\bigstrut\\
\hline
314&$\frac{g_{3143}}{2}$&$\Dd_0$&$\frac{g_{3144}}{2}$&$\Dd_1$&$\frac{g_{3142}}{2}$&$\Dd_2$&0&-&0&-\bigstrut\\
\hline
323&$\frac{g_{3232} + g_{3231}}{2}$&$\Dd_0$&$\frac{g_{3233} + g_{3234}}{2}$&$\Dd_1$&$\frac{g_{3234}}{2}$&$\Dd_2$&0&-&0&-\bigstrut\\
\hline
413&$\frac{g_{4131}}{2}$&$\Dd_0$&$\frac{g_{4132}}{2}$&$\Dd_1$&$\frac{g_{4134}}{2}$&$\Dd_2$&0&-&0&-\bigstrut\\
\hline
422&$\frac{g_{4221}}{2}$&$\Dd_0$&$g_{4222}$&$\Dd_1$&$g_{4224}$&$\Dd_2$&0&-&0&-\bigstrut\\
\hline
512&0&-&$\frac{g_{5121}}{2}$&$\Dd_1$&$\frac{g_{5122}}{2}$&$\Dd_2$&0&-&0&-\bigstrut\\
\hline
611&0&-&0&-&$\frac{g_{6111}}{2}$&$\Dd_2$&0&-&0&-\bigstrut\\
\hline
\end{tabular}\vspace{-5mm}
\end{center}
\caption{The terms $\beta_{ijk}^{r\textrm{++}}$ and the distributions $\bar \beta_{ijk}^{r\textrm{++}}$ in Section \ref{sub:rec-param} for $j\le k$. The symbol ``-" means that the distribution is not defined (since $\beta_{ijk}^{r\textrm{++}}=0$). In this table we use $\Dd_0$, $\Dd_1$ and $\Dd_2$ to denote the following three probability distributions over $\{0,1,2\}$: $\Dd_0= (1,0,0)$, $\Dd_1= \left(\frac{1}{2},\frac{1}{2},0\right)$, $\Dd_2= \left(\frac{1-\tilde b}{2},\tilde b,\frac{1-\tilde b}{2}\right)$.
}
\label{table:beta}
\end{table}

\paragraph{The term $\boldsymbol{\chi_{ijk}}$.}
Define the quantity
\[
\chi_{ijk}=\ent(\Aa_{ijk})-\ent(\gamma_{ijk})+
\sum_{\substack{(rst)\in S_4\\s=0\textrm{ or }t=0}} \varphi_{ijk}(rst)\cdot \ent(\Dd_{rst})
+
\sum_{r=0}^4 2\beta_{ijk}^{r\textrm{++}}\cdot \ent\left(\bar \beta_{ijk}^{r\textrm{++}}\right).
\]

This quantity is similar to $\chi$ in Section \ref{sec:loss}, and represents the combination loss in the analysis of $T_{ijk}$.
It can be shown (see Lemma 7.12 in \cite{Duan+23}) that  $2^{\chi_{ijk}}\in [0,1]$, and thus $\chi_{ijk}\le 0$.

%====
\subsection{Including the combination loss of components into the analysis of Section \ref{sec:asym}}\label{sub:rec-analysis}
%====
Consider the new condition
\begin{equation}\tag{D3'}\label{eq:DD3}
\sum_{(ijk)\in  \bar S_8}\alpha_{ijk}\left(\ent(\Aa_{ijk})-\chi_{ijk}\right)\ge 
\sum_{(ijk)\in \bar S_8}\alpha_{ijk}\ent(\Bb_{ijk}),
\end{equation}
which is a relaxation of Condition \eqref{eq:D3}, since $\chi_{ijk}\le0$ for all $(ijk)\in \bar S_8$.

The analysis of \cite[Section 7]{Duan+23} shows the following theorem.\footnote{See in particular Equations (33) and (34) at the end of Section 7.2 of \cite{Duan+23}. Here is the correspondence between the main terms in \cite{Duan+23} and the terms of our paper:
\begin{align*}
\alpha_{BX}&\hspace{5mm}\longleftrightarrow\hspace{5mm} \prod_{(ijk)\in  \bar S_8} 2^{\alpha_{ijk}\ent(B_{ijk})},\\
\alpha_{BZ}&\hspace{5mm}\longleftrightarrow\hspace{5mm}\prod_{(ijk)\in  \bar S_8} 2^{\alpha_{ijk}\ent(A_{ijk})},\\
\alpha_{P}&\hspace{5mm}\longleftrightarrow\hspace{5mm} \prod_{(ijk)\in  \bar S_8}2^{\alpha_{ijk}\chi_{ijk}}.
\end{align*}
Note that 
\[
\min\left(\alpha_{BX},\frac{\alpha_{BZ}}{\alpha_P}\right)=\prod_{(ijk)\in  \bar S_8} 2^{\alpha_{ijk}\ent(B_{ijk})}
\] 
under Condition \eqref{eq:DD3}.
Also note that due to Condition \eqref{eq:D4} on the parameters, we have $\max_{\alpha'\in D_{\alpha}} \{\alpha'_N\}=\alpha_N$ in Equation (34) of~\cite{Duan+23}.}

\begin{mytheorem}{\ref{th:2}'}[Adapted from Section 7 in \cite{Duan+23}]
For any set of parameters $\alpha_{ijk},g_{ijk\ell}\in\Rat[0,1]$ satisfying conditions \eqref{eq:D1}, \eqref{eq:D2}, \eqref{eq:DD3}  and \eqref{eq:D4}, and any large enough $N$, the tensor
\[
\bigotimes_{(ijk)\in \bar S_8}T_{ijk}^{\otimes \alpha_{ijk}N}
\]
can be converted by a combinatorial restriction into a direct sum of 
\[
\prod_{(ijk)\in  \bar S_8}2^{\alpha_{ijk}\ent(\Bb_{ijk})(1-o(1))N}
\]
terms, each isomorphic to the tensor 
\[
\bigotimes_{(ijk)\in \bar S_8} \bigotimes_{(abc)\in S_4}\Tt_{abc}^{\otimes \varphi_{ijk}(abc)\alpha_{ijk}N}.
\]
\end{mytheorem}
\addtocounter{theorem}{-1}

Note that the only difference between Theorem \ref{th:2}' and Theorem \ref{th:2} is that Condition  \eqref{eq:D3} is relaxed to Condition \eqref{eq:DD3}.

Replacing Theorems \ref{th:1} and \ref{th:2} by Theorems \ref{th:1}' and \ref{th:2}', respectively, in the analysis of Section~\ref{sub7}, we immediately obtain the following theorem.

\begin{theorem}\label{th:opt3}
Consider any $\kappa\ge 0$. For any set of parameters $\alpha_{ijk},g_{ijk\ell}\in\Rat[0,1]$ and any $b,\tilde b\in[0,1]$ satisfying Conditions \eqref{eq:C1},  \eqref{eq:C2}, \eqref{eq:C3}, \eqref{eq:CC4}, \eqref{eq:D1}, \eqref{eq:D2}, \eqref{eq:DD3}, \eqref{eq:D4}, \eqref{eq:E1}, \eqref{eq:E2}  the upper bound
\[
\omega(\kappa)\le \frac{4\log{(q+2)}-\Gamma-\ent(\Bb)}{\Delta_x}
\]
holds.
\end{theorem}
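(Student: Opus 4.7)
The plan is to mirror verbatim the proof of Theorem \ref{th:opt1}, substituting Theorem \ref{th:1}$'$ and Theorem \ref{th:2}$'$ for Theorems \ref{th:1} and \ref{th:2}. The crucial observation is that the primed versions produce exactly the same direct-sum decomposition (same number of terms and same isomorphism type of each term) as the originals; only the hypotheses under which they apply are relaxed (from \eqref{eq:C4} to \eqref{eq:CC4} and from \eqref{eq:D3} to \eqref{eq:DD3}). Hence once the substitution is made, the downstream extractions and counting go through without modification.

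Concretely, first I would apply Theorem \ref{th:1}$'$ (whose hypotheses \eqref{eq:C1}, \eqref{eq:C2}, \eqref{eq:C3}, \eqref{eq:CC4} are assumed) to convert $\tcw^{\otimes 4N}$, by combinatorial restriction, into a direct sum of $2^{\ent(\Bb)(1-o(1))N}$ terms each isomorphic to $\bigotimes_{(ijk)\in S_8} T_{ijk}^{\otimes\alpha_{ijk}N}$. I then split this tensor product into the factor over $\bar S_8$ and the factor over $S_8\setminus\bar S_8$. To the first I apply Theorem \ref{th:2}$'$ (using \eqref{eq:D1}, \eqref{eq:D2}, \eqref{eq:DD3}, \eqref{eq:D4}), obtaining a further direct sum of $\prod_{(ijk)\in\bar S_8}2^{\alpha_{ijk}\ent(\Bb_{ijk})(1-o(1))N}$ terms each isomorphic to $\bigotimes_{(ijk)\in\bar S_8}\bigotimes_{(abc)\in S_4}\Tt_{abc}^{\otimes\varphi_{ijk}(abc)\alpha_{ijk}N}$. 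To the second factor I apply Theorem \ref{th:3} componentwise, producing a single matrix multiplication tensor in each case. Using the identities \eqref{eq:term} to identify $\Tt_{abc}$ as a matrix multiplication tensor for every $(abc)\in S_4\setminus\{(112),(121),(211)\}$, and using Theorem \ref{th:4} on the remaining factors $\Tt_{112}$, $\Tt_{121}$, $\Tt_{211}$ (whose hypothesis is exactly \eqref{eq:E1}), all remaining factors become matrix multiplication tensors. Multiplying the term counts and collecting the dimensions gives a direct sum of $2^{(\Gamma+\ent(\Bb))(N-o(N))}$ tensors each isomorphic to $\langle 2^{\Delta_x(N-o(N))},2^{\Delta_y(N-o(N))},2^{\Delta_z(N-o(N))}\rangle$, where $\Delta_y=\Delta_x$ by the symmetry conditions \eqref{eq:C2} and \eqref{eq:D1}.

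Finally, invoking $\underline R(\tcw^{\otimes 4N})\le(q+2)^{4N}$ from Section \ref{sub:CW} and Proposition \ref{th:value} (applicable thanks to \eqref{eq:E2}, which yields $\Delta_z\ge\kappa\Delta_x$), I obtain the inequality
\[
(\Gamma+\ent(\Bb)+\omega(\kappa)\Delta_x)(N-o(N))\le 4\log(q+2)\,N,
\]
dividing by $N$ and letting $N\to\infty$ gives $\Gamma+\ent(\Bb)+\omega(\kappa)\Delta_x\le 4\log(q+2)$, which rearranges to the claimed bound. The main conceptual point to check is that Theorems \ref{th:1}$'$ and \ref{th:2}$'$ really do have the same conclusions as Theorems \ref{th:1} and \ref{th:2}, so that no part of the chain needs to be re-derived; this is established in Sections \ref{sub:analysis-loss} and \ref{sub:rec-analysis}. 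There is no genuine computational obstacle — the proof is essentially a one-line appeal to the already-completed proof of Theorem \ref{th:opt1} with the stronger (more permissive) extraction lemmas plugged in.
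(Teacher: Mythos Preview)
Your proposal is correct and follows exactly the paper's own approach: the paper simply states that Theorem~\ref{th:opt3} is obtained by replacing Theorems~\ref{th:1} and~\ref{th:2} with Theorems~\ref{th:1}$'$ and~\ref{th:2}$'$ in the analysis of Section~\ref{sub7}. Your write-up merely spells out in detail the chain of extractions that the paper leaves implicit, but the logic is identical.
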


The only difference between Theorem \ref{th:opt2} and Theorem \ref{th:opt3} is that Condition \eqref{eq:D3} is relaxed to Condition \eqref{eq:DD3}.

We have implemented the optimization problem corresponding to Theorem \ref{th:opt3} in Maple. The search is similar that the search done in Section \ref{sub:analysis-loss}, but imposes Condition \eqref{eq:DD3} instead of \eqref{eq:D3} on the parameters. The file of the optimization program can be found at \cite{File}.  For $\kappa=2$, we obtain the upper bound 
\[
\omega<3.250563,
\]
which further improves the upper bound  from Section \ref{sec:loss}. More generally, we obtain the upper bounds shown in Table \ref{table_ours} in Section \ref{sec:intro}.

The parameters giving this upper bound for $\omega(2)$ are shown in Table \ref{table:opt3} (the parameters for all the others values of $\kappa$ can be found at \cite{File}). For these parameters, we have
\begin{align*}
\Gamma+\ent(\Aa)+3.250563\cdot\Delta_x&= 11.229435...,\\
4\log(q+2)&=11.229419...,
\end{align*}
which implies the bound $\omega<3.250563$. 
The file available at \cite{File} also includes a program to check that these parameters satisfy all the constraints of Theorem~\ref{th:opt3}, as well as these calculations. 
We observe that for these parameters we have 
\begin{align*}
\ent(\Aa)&=2.14121...,\\
\ent(\Bb)&=2.14504...,\\
\sum_{(ijk)\in  \bar S_8}\alpha_{ijk}\ent(\Aa_{ijk})&=0.97706...,\\ 
\sum_{(ijk)\in \bar S_8}\alpha_{ijk}\ent(\Bb_{ijk})&=0.99206...,\\
-\chi&=0.00383...,\\
-\sum_{(ijk)\in \bar S_8}\alpha_{ijk}\chi_{ijk}&=0.01500..,
\end{align*}
so that both Inequalities \eqref{eq:CC4} and \eqref{eq:DD3} are saturated.
\section*{Acknowledgments}
This work was supported by JSPS KAKENHI grants Nos.~JP19H04066, JP20H05966, JP20H00579, JP20H04139, JP21H04879 and MEXT Quantum Leap Flagship Program (MEXT Q-LEAP) grant No.~JPMXS0120319794.

% !TEX root = ./LeG23.tex
\begin{table}[!hbp]
\begin{center}
\begin{tabular}{|c|c|}
\hline
$q$&5\\
\hline
$b$&0.9365556371\\
\hline
$\tilde b$&0.9966910575\\
\hline
$\alpha_{008}$&0.0000027575\\
$\alpha_{800}$&0.0000001000\\
$\alpha_{017}$&0.0001862683\\
$\alpha_{107}$&0.0000832431\\
$\alpha_{701}$&0.0000001000\\
$\alpha_{026}$&0.0035107863\\
$\alpha_{206}$&0.0008628502\\
$\alpha_{602}$&0.0000068293\\
$\alpha_{305}$&0.0033121296\\
$\alpha_{035}$&0.0242970779\\
$\alpha_{503}$&0.0003415907\\
$\alpha_{044}$&0.0547697273\\
$\alpha_{404}$&0.0039854191\\
$\alpha_{611}$&0.0000101383\\
$\alpha_{116}$&0.0023291693\\
$\alpha_{512}$&0.0007945380\\
$\alpha_{215}$&0.0138897271\\
$\alpha_{125}$&0.0252564389\\
$\alpha_{134}$&0.0979381496\\
$\alpha_{314}$&0.0298741641\\
$\alpha_{413}$&0.0159468865\\
$\alpha_{224}$&0.0843907598\\
$\alpha_{422}$&0.0249860330\\
$\alpha_{233}$&0.1451737658\\
$\alpha_{323}$&0.0805212325\\
\hline
$g_{0081}$&1\\
$g_{8001}$&1\\
$g_{0171}$&$1$\\
\hline
\end{tabular}
\hspace{5mm}
\begin{tabular}{|c|c|}
\hline
$g_{1071}$&$1$\\
$g_{7011}$&$1$\\
$g_{0261}$&0.3506501120\\
$g_{0262}$&0.6493498880\\
$g_{2061}$&0.3506492127\\
$g_{2062}$&0.6493507873\\
$g_{6021}$&0.3010463938\\
$g_{6022}$&0.6989536062\\
$g_{0351}$&0.0357142700\\
$g_{0352}$&0.9642857300\\
$g_{3051}$&0.0357137177\\
$g_{3052}$&0.9642862823\\
$g_{5031}$&0.0357137389\\
$g_{5032}$&0.9642862611\\
$g_{0441}$&0.0021482349\\
$g_{0442}$&0.2148227531\\
$g_{0443}$&0.7830290119\\
$g_{4041}$&0.0021482259\\
$g_{4042}$&0.2148225002\\
$g_{4043}$&0.7830292739\\
$g_{1161}$&0.2721447743\\
$g_{1162}$&0.7278552257\\
$g_{6111}$&0.1857384984\\
$g_{6112}$&0.8142615016\\
$g_{1251}$&0.0207974446\\
$g_{1252}$&0.6217716390\\
$g_{1253}$&0.3574309163\\
$g_{2151}$&0.0235531015\\
$g_{2152}$&0.6544128334\\
$g_{2153}$&0.3220340651\\
\hline
\end{tabular}
\hspace{5mm}
\begin{tabular}{|c|c|}
\hline
$g_{5121}$&0.0032282394\\
$g_{5122}$&0.6512006766\\
$g_{5123}$&0.3455710840\\
$g_{1341}$&0.0011946368\\
$g_{1342}$&0.1527219359\\
$g_{1343}$&0.0322058897\\
$g_{1344}$&0.8138775377\\
$g_{3141}$&0.0013233202\\
$g_{3142}$&0.1974477888\\
$g_{3143}$&0.0074009706\\
$g_{3144}$&0.7938279205\\
$g_{4131}$&0.0000534469\\
$g_{4132}$&0.0291489226\\
$g_{4133}$&0.0333470150\\
$g_{4134}$&0.9374506155\\
$g_{2241}$&0.0011802914\\
$g_{2242}$&0.0973553998\\
$g_{2243}$&0.2709385257\\
$g_{2244}$&0.5331703834\\
$g_{4221}$&0.0000761326\\
$g_{4222}$&0.0312465498\\
$g_{4223}$&0.3174766600\\
$g_{4224}$&0.6199541079\\
$g_{2331}$&0.0272227546\\
$g_{2332}$&0.0030572369\\
$g_{2333}$&0.3719039950\\
$g_{2334}$&0.5705932589\\
$g_{3231}$&0.0218474506\\
$g_{3232}$&0.0033067338\\
$g_{3233}$&0.3056760652\\
$g_{3234}$&0.6473222999\\
\hline
\end{tabular}
\end{center}\vspace{-2mm}
\caption{The parameters for the optimization of Section \ref{sub7} to get the upper bound $\omega(2)<3.251640$. Some parameters (e.g., $\alpha_{800}$) have value $10^{-7}$ since in our program the search space for each variable is set to the interval $[10^{-7},1]$.}\label{table:opt1}
\end{table}
% !TEX root = ./LeG23.tex
\begin{table}[!hbp]
\begin{center}
\begin{tabular}{|c|c|}
\hline
$q$&5\\
\hline
$b$&0.9369462527\\
\hline
$\tilde b$&0.9963665799\\
\hline
$\alpha_{008}$&0.0000026853\\
$\alpha_{800}$&0.0000001089\\
$\alpha_{017}$&0.0001831716\\
$\alpha_{107}$&0.0000845080\\
$\alpha_{701}$&0.0000001073\\
$\alpha_{026}$&0.0034815975\\
$\alpha_{206}$&0.0008931159\\
$\alpha_{602}$&0.0000071761\\
$\alpha_{305}$&0.0034129622\\
$\alpha_{035}$&0.0241985602\\
$\alpha_{503}$&0.0003466088\\
$\alpha_{044}$&0.0547677427\\
$\alpha_{404}$&0.0040707363\\
$\alpha_{611}$&0.0000103553\\
$\alpha_{116}$&0.0023178674\\
$\alpha_{512}$&0.0007806894\\
$\alpha_{215}$&0.0139816197\\
$\alpha_{125}$&0.0251459484\\
$\alpha_{134}$&0.0977911853\\
$\alpha_{314}$&0.0298952283\\
$\alpha_{413}$&0.0157546242\\
$\alpha_{224}$&0.0848706728\\
$\alpha_{422}$&0.0245910188\\
$\alpha_{233}$&0.1458323985\\
$\alpha_{323}$&0.0801801232\\
\hline
$g_{0081}$&1\\
$g_{8001}$&1\\
$g_{0171}$&$1$\\
\hline
\end{tabular}
\hspace{5mm}
\begin{tabular}{|c|c|}
\hline
$g_{1071}$&$1$\\
$g_{7011}$&$1$\\
$g_{0261}$&0.3506493508\\
$g_{0262}$&0.6493506492\\
$g_{2061}$&0.3387484371\\
$g_{2062}$&0.6612515629\\
$g_{6021}$&0.3901242472\\
$g_{6022}$&0.6098757528\\
$g_{0351}$&0.0357142857\\
$g_{0352}$&0.9642857143\\
$g_{3051}$&0.0409055622\\
$g_{3052}$&0.9590944378\\
$g_{5031}$&0.0459665139\\
$g_{5032}$&0.9540334861\\
$g_{0441}$&0.0021482277\\
$g_{0442}$&0.2148227712\\
$g_{0443}$&0.7830290011\\
$g_{4041}$&0.0032568914\\
$g_{4042}$&0.2797382879\\
$g_{4043}$&0.7170048207\\
$g_{1161}$&0.2703140763\\
$g_{1162}$&0.7296859237\\
$g_{6111}$&0.0000434071\\
$g_{6112}$&0.9999565929\\
$g_{1251}$&0.0203436825\\
$g_{1252}$&0.6237032466\\
$g_{1253}$&0.3559530709\\
$g_{2151}$&0.0229757598\\
$g_{2152}$&0.6447050796\\
$g_{2153}$&0.3323191606\\
\hline
\end{tabular}
\hspace{5mm}
\begin{tabular}{|c|c|}
\hline
$g_{5121}$&0.0000001000\\
$g_{5122}$&0.6524115161\\
$g_{5123}$&0.3475883839\\
$g_{1341}$&0.0011460467\\
$g_{1342}$&0.1513787481\\
$g_{1343}$&0.0315677240\\
$g_{1344}$&0.8159074812\\
$g_{3141}$&0.0012669246\\
$g_{3142}$&0.1935272338\\
$g_{3143}$&0.0087152905\\
$g_{3144}$&0.7964905510\\
$g_{4131}$&0.0000001000\\
$g_{4132}$&0.0085289607\\
$g_{4133}$&0.0327136707\\
$g_{4134}$&0.9587572686\\
$g_{2241}$&0.0011327520\\
$g_{2242}$&0.0962123683\\
$g_{2243}$&0.2731861575\\
$g_{2244}$&0.5332563541\\
$g_{4221}$&0.0000001000\\
$g_{4222}$&0.0206213157\\
$g_{4223}$&0.3269670585\\
$g_{4224}$&0.6317902100\\
$g_{2331}$&0.0268409293\\
$g_{2332}$&0.0028460012\\
$g_{2333}$&0.3782569548\\
$g_{2334}$&0.5652151854\\
$g_{3231}$&0.0219735436\\
$g_{3232}$&0.0033832115\\
$g_{3233}$&0.3042120062\\
$g_{3234}$&0.6484576951\\
\hline
\end{tabular}
\end{center}\vspace{-2mm}
\caption{The parameters for the optimization of Section \ref{sub:analysis-loss} to get the upper bound $\omega(2)<3.251502$. Some parameters (e.g., $g_{5121}$) have value $10^{-7}$ since in our program the search space for each variable is set to the interval $[10^{-7},1]$.}\label{table:opt2}
\end{table}
% !TEX root }$& ./LeG23.tex
\begin{table}[!hbp]
\begin{center}
\begin{tabular}{|c|c|}
\hline
$q$&5\\
\hline
$b$&0.9386986405\\
\hline
$\tilde b$&0.9999999000\\
\hline
$\alpha_{008}$&0.0000027522\\
$\alpha_{800}$&0.0000001000\\
$\alpha_{017}$&0.0001852694\\
$\alpha_{107}$&0.0000859878\\
$\alpha_{701}$&0.0000001000\\
$\alpha_{026}$&0.0034980486\\
$\alpha_{206}$&0.0008908045\\
$\alpha_{602}$&0.0000046150\\
$\alpha_{305}$&0.0033914676\\
$\alpha_{035}$&0.0241971433\\
$\alpha_{503}$&0.0003307825\\
$\alpha_{044}$&0.0547269864\\
$\alpha_{404}$&0.0040601320\\
$\alpha_{611}$&0.0000066778\\
$\alpha_{116}$&0.0023492364\\
$\alpha_{512}$&0.0007495489\\
$\alpha_{215}$&0.0139629817\\
$\alpha_{125}$&0.0254480643\\
$\alpha_{134}$&0.0985691144\\
$\alpha_{314}$&0.0297667390\\
$\alpha_{413}$&0.0157560121\\
$\alpha_{224}$&0.0846942718\\
$\alpha_{422}$&0.0246738020\\
$\alpha_{233}$&0.1450449800\\
$\alpha_{323}$&0.0798306553\\
\hline
$g_{0081}$&1\\
$g_{8001}$&1\\
$g_{0171}$&$1$\\
\hline
\end{tabular}
\hspace{5mm}
\begin{tabular}{|c|c|}
\hline
$g_{1071}$&$1$\\
$g_{7011}$&$1$\\
$g_{0261}$&0.3506234219\\
$g_{0262}$&0.6493765781\\
$g_{2061}$&0.3418984778\\
$g_{2062}$&0.6581015222\\
$g_{6021}$&0.2946092219\\
$g_{6022}$&0.7053907781\\
$g_{0351}$&0.0357143991\\
$g_{0352}$&0.9642856009\\
$g_{3051}$&0.0413543797\\
$g_{3052}$&0.9586456203\\
$g_{5031}$&0.0459265957\\
$g_{5032}$&0.9540734043\\
$g_{0441}$&0.0021482283\\
$g_{0442}$&0.2148218909\\
$g_{0443}$&0.7830298809\\
$g_{4041}$&0.0032308969\\
$g_{4042}$&0.2858709132\\
$g_{4043}$&0.7108981899\\
$g_{1161}$&0.2737234147\\
$g_{1162}$&0.7262765853\\
$g_{6111}$&0.7918838978\\
$g_{6112}$&0.2081161022\\
$g_{1251}$&0.0210278530\\
$g_{1252}$&0.6227553739\\
$g_{1253}$&0.3562167731\\
$g_{2151}$&0.0238860252\\
$g_{2152}$&0.6588244244\\
$g_{2153}$&0.3172895504\\
\hline
\end{tabular}
\hspace{5mm}
\begin{tabular}{|c|c|}
\hline
$g_{5121}$&0.0000001000\\
$g_{5122}$&0.6469598834\\
$g_{5123}$&0.3530400166\\
$g_{1341}$&0.0012060864\\
$g_{1342}$&0.1532284648\\
$g_{1343}$&0.0321576335\\
$g_{1344}$&0.8134078153\\
$g_{3141}$&0.0013347007\\
$g_{3142}$&0.1983635770\\
$g_{3143}$&0.0067943448\\
$g_{3144}$&0.7935073775\\
$g_{4131}$&0.0000001000\\
$g_{4132}$&0.0000001000\\
$g_{4133}$&0.0327047602\\
$g_{4134}$&0.9672950398\\
$g_{2241}$&0.0011919755\\
$g_{2242}$&0.0974575691\\
$g_{2243}$&0.2691861027\\
$g_{2244}$&0.5347067834\\
$g_{4221}$&0.0000001000\\
$g_{4222}$&0.0153803668\\
$g_{4223}$&0.3379765251\\
$g_{4224}$&0.6312626413\\
$g_{2331}$&0.0278663192\\
$g_{2332}$&0.0036110198\\
$g_{2333}$&0.3549107712\\
$g_{2334}$&0.5857455706\\
$g_{3231}$&0.0215758711\\
$g_{3232}$&0.0031002716\\
$g_{3233}$&0.3107281804\\
$g_{3234}$&0.6430198058\\
\hline
\end{tabular}
\end{center}\vspace{-2mm}
\caption{The parameters for the optimization of Section \ref{sub:rec-analysis} to get the upper bound $\omega(2)<3.250563$. Some parameters (e.g., $g_{5121}$) have value $10^{-7}$ since in our program the search space for each variable is set to the interval $[10^{-7},1]$}\label{table:opt3}
\end{table}
\newpage

%=====================================
%=====================================
\bibliographystyle{plain}
\bibliography{RMM23}
%=====================================
\end{document}